\newtheorem*{definition}{Definition}
\newtheorem{theorem}{Theorem}[subsection]
\newtheorem*{corollary}{Corollary}
\newtheorem*{hypothesis}{Hypothesis}
\title{Towards a Theory of Pragmatic Information}
\author{Edward D. Weinberger\\Department of Finance and Risk Engineering\\NYU Tandon School of Engineering\\edw2026@nyu.edu}
\begin{document}
\maketitle
\abstract{
Standard information theory says nothing about how much meaning is conveyed by a message.  We fill this gap with a rigorously justifiable, quantitative definition of ``pragmatic information'', the amount of meaning in a message relevant to a particular decision.  We posit that such a message updates a random variable, $\omega$, that informs the decision.  The pragmatic information of a single message is then defined as the Kulbach-Leibler divergence between the prior and posterior probabilities of $\omega$; the pragmatic information of a message ensemble is the expected value of the pragmatic information of the ensemble's component messages.  We justify these definitions by proving that the pragmatic information of a single message is the expected difference between the shortest binary encoding of $\omega$ under the a priori and a posteriori distributions, and that the average of the pragmatic values of individual messages, when sampled a large number of times from the ensemble, approaches its expected value.\\

Pragmatic information is non-negative and additive for independent decisions and ``pragmatically independent'' messages.  Also, pragmatic information is the information analogue of free energy:  just as free energy quantifies the part of a system's total energy available to do useful work, so pragmatic information quantifies the information actually used in making a decision.\\

We sketch 3 applications:  the single play of a slot machine, a.k.a. a ``one armed bandit'', with an unknown payout probability; a characterization of the rate of biological evolution in the so-called ``quasi-species'' model; and a reformulation of the efficient market hypothesis of finance.  We note the importance of the computational capacity of the receiver in each case.}

\pagebreak
\parindent 0px
\section{Introduction}
Just about any discussion of standard information theory notes that the standard entropy measure of the amount of information in a transmitted message says nothing about how meaningful the message is, even 
though a non-technical person would say that a message is ``informative'' if they are able to extract meaning from it.  Warren Weaver's introduction to Shannon’s foundational paper \cite{ShannonAndWeaver62} 
addresses this point by observing that the effectiveness of a communications process could be measured by answering any of the following three questions: \\
\begin{enumerate}
\item[A.] How accurately can the symbols that encode the message be transmitted (``the technical problem")?
\item[B.] How precisely do the transmitted symbols convey the desired meaning (``the semantics problem")?
\item[C.] How effective is the received message in changing conduct (``the effectiveness problem")?\\
\end{enumerate}
Weaver then observes that Shannon's paper --- and thus the entire edifice of what is now known as ``information theory" --- concerns itself only with the answer to above question A\footnote{Arguably, however,{\it every} answer to question A is implicitly an answer to question C, as  every transmission implies that {\it something} in the receiver, which might be called its “conduct”, must have changed.}.  The present paper, in contrast, is a continuation of previous work \cite{Weinberger02}, in which it was proposed that the only definitive way to assess the ``meaning” of question B above is to quantify the “changed conduct” of question C\footnote{Such a quantification begs the question of whether this change in conduct is actually a change for the better!  Indeed, recent events have shown that dis-information can be quite effective in changing conduct, an observation that will be taken into account below.}.\\

Although the primary goal of the present work is to advance the theory as such, any theory that is termed ``pragmatic'' should have practical applications, so we mention a few.  \cite{Weinberger02} proposed that the rate at which biological evolution aggregates ``meaningful information'' could be used to measure the rate at which evolution proceeds.  Throughout the present paper, we present a second application, namely
 how knowledge of previous payouts of a slot machine acquires meaning in the context of deciding whether to keep playing.  A final application is a reformulation of the celebrated efficient market hypothesis of financial economics as the claim that none of the available information about a given security is ``pragmatically useful''.\\

In the next section, we discuss the pragmatic information of a single message, preparing our quantitative results with a discussion of the properties that pragmatic information should have.
Some of these properties have been noted by other authors, qualitatively by such authors as \cite{Weizaeker} and \cite{Gernert}
and more quantitatively by \cite{Jumarie} and \cite {wow}, the latter having proposed definitions of pragmatic information similar to ours.  However, they do not provide anything like our demonstration that this definition follows from natural 
assumptions about what a ``measure of meaning'' should be.\\

We then consider the pragmatic information of messages in an ensemble.  As part of that discussion, we provide an example of how our definition applies to 
the simplest multi-armed bandit problem, that of playing a single slot machine with an unknown payout probability.  We also consider what it means for messages to be ``pragmatically independent'' and
the relationship between the pragmatic information of an ensemble of messages and the mutual information between the message ensemble and the ``state of the world'' variable $\omega$.  \\

We conclude with speculation about how a kind of ``computational capacity'' may play a role in the theory of pragmatic information similar to that of channel capacity in the Shannon theory, in that beliefs about $\omega$ cannot be updated during the non-zero time required to complete message processing.  Given that a participant in the financial markets can only find market information useful if it is meaningful in our sense, we also
discuss our ideas in the context of the efficient market hypothesis.

\section{The Pragmatic Information of a Single Message}
\subsection{Our Conceptual Framework}
As with the Shannon theory of information, we define a {\it message} to be a finite symbol string in some alphabet, albeit with the additional implication that meaning is being communicated.   
If the practical meaning of such a message stems from its usefulness in making a decision, the message must do so by changing those beliefs about the state of the world that will inform the subsequent action.  Accordingly, we consider a decision maker, $\Delta$, which may be a machine, a human, or a non-human organism.  $\Delta$'s decision is informed by its limited knowledge of the state of the world.  We represent this limited knowledge as the probability distribution of a random variable $\omega$ in some discrete sample space $\Omega =\{\omega_1, \omega_2, \ldots, \omega_i,\ldots, \omega_N \}$.  Initially, 
$\Delta$ assigns {\it a priori} probabilities  $\mathbf q = ( q_1, q_2, \ldots, q_N)$, each of which are positive, to the various possible values of $\omega$.  We
refer to $\mathbf q$ as {\it a priori} probabilities because we then supply $\Delta$ with a ``message", $m$.  As a result, $\Delta$'s beliefs about the state of the world change from $\mathbf q$ to 
$\mathbf p_m = ( p_{1|m}, p_{2|m}, \ldots, p_{N|m})$.\\  

While our conceptual framework can be applied to the wealth of so-called ``multi-armed bandit problems" (see, for example, \cite{Robbins}, \cite{exploit} and \cite{MultiarmedBanditBook}.), we consider the following, even 
simpler problem:  we have to decide whether to pay \$1 to play a slot machine (also known as a ``one armed bandit").  We know that will win a \$2 payout with some 
probability $\pi$, or nothing, with probability $1 - \pi$.  We have access to the outcomes of as many previous plays as we wish.  We also know that 
the payout probability is $\pi$ for each play, and each play is independent of the others.  The problem is that we don't know the value of $\pi$!\\

Suppose we denote the outcome of the $t^{th}$ such play by the random variable $\mathbf 1_t$, with
$$
\mathbf 1_t = \begin{cases} 1 & \quad \text{if we win on the $t^{th}$ play} \\
                             0 & \quad \text{otherwise}
                     \end{cases}
$$
Clearly, we learn something about $\pi$, and thus the correct play/no play decision for subsequent trials from the realized values of the sequence $\mathbf 1_1, \mathbf 1_2, \ldots, \mathbf 1_T$ , but exactly how much?
If we have won $w$ times in the $T$ trials, Laplace's Rule of Succession \cite{succession} estimates $\pi$ as
\begin{align}
\hat \pi(w, T) = \frac{w+1}{T+2}   \label{eq: succession}.
\end{align}
$\hat \pi(w, T)$ is less and less sensitive to whether or not any individual trial results in a payout as $T$ increases.  Thus, we learn less and less about what we need to know from each additional value in this sequence; indeed, after a finite number of such values, we know essentially all we need to know to make our decision.\\
  
In contrast, the amount of Shannon information, 
$$
\mathcal H(\mathbf 1_t) = \pi \log_2 \pi + (1-p) \log_2 (1-\pi),
$$
per play remains constant from one play to the next.  However, the Shannon information captures data such as whether particular plays are wins or losses, 
which is more than we need to know in order to estimate $\pi$ using \eqref{eq: succession}.\\

\subsection{Defining the Pragmatic Information of a Single Message}
Informed by the above, we propose the following
\begin{definition}
The {\bf pragmatic information}, $D_\Delta(\mathbf p_m || \mathbf q)$,
 of a single message $m$ acting on $\Delta$ is\footnote{As is customary in writing about information theory, we assume base
 2 logarithms here and subsequently.}
\begin{align}
D_\Delta(\mathbf p_m || \mathbf q) = \sum_i  p_{i|m} \log \left(\frac{p_{i|m}}{q_i}\right), \label{eq: BasicDef}
\end{align}
with the usual convention that $p_{i|m} \log \left(\frac{p_{i|m}}{q_i}\right) = 0$ if $p_{i|m} =0$.  In other words,
$D_\Delta(\mathbf p_m || \mathbf q)$ is the Kullbach-Liebler divergence of $\mathbf p_m$ and $\mathbf q$.
\end{definition}

As we will see in more detail below, our definition of pragmatic information incorporates many of the features postulated by \cite{Weizaeker} and \cite{Gernert}, such as:
\begin{itemize}
\item $D_\Delta(\mathbf p_m || \mathbf q) = 0$ if $m$ is already known to $\Delta$.
\item $D_\Delta(\mathbf p_m || \mathbf q) = 0$ if  if $\Delta$ cannot process $m$.
\item $D_\Delta(\mathbf p_m || \mathbf q)$ will depend on $\Delta$, and, more specificially, on the context in which $\Delta$ receives $m$.
\item For two different messages, $m$ and $m'$, 
$$
D_\Delta(\mathbf p_m || \mathbf q) = D_\Delta(\mathbf p_{m'} || \mathbf q)
$$
if $m$ and $m'$ always lead $\Delta$ to choose the same decision probabilities, even if $m$ and $m'$ have different lengths and/or content.
\item $D_\Delta(\mathbf p_m || \mathbf q)$ increases if $\Delta$ is able to recognize the increasing novelty of $m$ as $\mathbf p_m$ differs more from $\mathbf q$.  However, if $m$ is completely novel, in the sense that $\Delta$ will be unable to process it, $D_\Delta(\mathbf p_m || \mathbf q) = 0$.  This situation is considered in more detail below. 
\end{itemize}

Our conceptual framework also allows us to look more carefully at the following example, introduced by \cite{Gernert}:
\begin{quote}
\it{Consider a cash
register which stores the data of all sales over the day. After closing time, the shopkeeper
is not interested in the list of numbers, but in the total. By performing the addition the
cash register produces pragmatic information, since that total comes closer to the user
requirements than the list of raw data would do.}
\end{quote}
In our framework, $\Delta$ would be the shopkeeper who, presumably, needs to make decisions based on the true sales volume for the day, which we take as the relevant random variable $\omega$.  $\Delta$ would have an {\it a priori} estimate of $\omega$, based on the list of individual sales that is reflected in an {\it a priori} probability distribution, $\mathbf q$.  This distribution,
when updated upon learning the cash register output of total sales, $S_T$, for the day, is likely to become more peaked around a single value.  Thus, if
$\mathbf p_T$ is this updated distribution, $D_\Delta(\mathbf p_T || \mathbf q) > 0$.\\  

However, everything depends on the shopkeeper's subjective beliefs!  The shopkeeper might believe that the buggy software of the cash register is less reliable than the shopkeeper's own manual addition, in which case $S_T$ will be ignored  and $D_\Delta(\mathbf p_T || \mathbf q) = 0$.\\

The formula that \cite{wow} proposed for ``surprise" is essentially the same as \eqref{eq: BasicDef} for a single message.  Below, we will generalize this definition to any $m$ in some ensemble of possible messages.  However, \cite{wow} left ambiguous whether to define this quantity as (in our notation) $D_\Delta(\mathbf p_m || \mathbf q)$ or $D_\Delta(\mathbf q || \mathbf p_m)$.  They prefer the latter, as it easier for them to compute intuition-building analytic results in specific cases; however, we prefer the former, as we assume -- perhaps naively! -- that $\mathbf p_m$ really does provide an improvement over $\mathbf q$.  Our preferred definition also makes more sense in subsequent sections of this paper, where we list some known/easily derivable properties\footnote{See \cite{CoverAndThomas} for proofs of all of the theorems in the next subsection, except for our ``Wrong Code Theorem'', which is an embellishment of their results, and the corollary to Theorem 2.3.4.} of $D_\Delta(\mathbf p_m || \mathbf q)$ that make it an appealing choice for a measure of pragmatic information of a single message.\\ 

\subsection{Properties of the Pragmatic Information of a Single Message}

The fundamental justification for our definition of the pragmatic information of a single message is found in the following two theorems and a corollary to the second:
\begin{theorem}[Wrong Code Theorem]\label{WrongCode}

Suppose $L_{\mathbf q}(\omega)$ is the length, in bits, of the shortest binary code required to communicate that $\Delta$ has decided upon outcome $\omega$, assuming the prior probabilities $\mathbf q$, and suppose $L_{\mathbf p_m}(\omega)$ is the corresponding length, assuming the prior probabilities $\mathbf p_m$.  Then 
$$
{\mathcal E}\left[L_{\mathbf q}(\omega) -  L_{\mathbf p_m}(\omega)\right]  = D_\Delta(\mathbf p_m || \mathbf q),
$$
where ${\mathcal E}\left[L_{\mathbf q}(\omega) -  L_{\mathbf p_m}(\omega)\right] $ is  the expected length of  
$L_{\mathbf q}(\omega) -  L_{\mathbf p_m}(\omega)$ under the {\rm a posteriori} probabilities $\mathbf p_m$. 
\end{theorem}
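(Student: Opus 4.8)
The plan is to exploit the standard information-theoretic fact that the shortest code length assignable to an outcome that occurs with probability $r$ is the ideal (Shannon) codeword length $-\log_2 r$. First I would identify, for the a priori distribution $\mathbf q$, that the shortest binary code communicating $\omega = \omega_i$ assigns length $L_{\mathbf q}(\omega_i) = -\log_2 q_i$, and, symmetrically, $L_{\mathbf p_m}(\omega_i) = -\log_2 p_{i|m}$ for the a posteriori distribution $\mathbf p_m$. These are precisely the lengths that minimize the expected code length under their respective distributions, a consequence of the Kraft inequality together with the nonnegativity of relative entropy; since the paper already cites \cite{CoverAndThomas} for these background facts, I would invoke them rather than reprove them.

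The core computation is then a one-line algebraic identity. Forming the pointwise difference of the two code lengths gives
\begin{align*}
L_{\mathbf q}(\omega_i) - L_{\mathbf p_m}(\omega_i) = -\log_2 q_i + \log_2 p_{i|m} = \log_2\!\left(\frac{p_{i|m}}{q_i}\right).
\end{align*}
Taking the expectation of this random quantity under the a posteriori distribution $\mathbf p_m$ --- as the statement specifies --- yields
\begin{align*}
{\mathcal E}\!\left[L_{\mathbf q}(\omega) - L_{\mathbf p_m}(\omega)\right] = \sum_i p_{i|m} \log_2\!\left(\frac{p_{i|m}}{q_i}\right) = D_\Delta(\mathbf p_m || \mathbf q),
\end{align*}
which is exactly the definition \eqref{eq: BasicDef} of the pragmatic information. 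The indices with $p_{i|m} = 0$ drop out under the same convention already adopted in the definition.

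The main obstacle is the gap between idealized real-valued code lengths and genuine integer codeword lengths: a literally shortest uniquely decodable code must assign integer lengths $\lceil -\log_2 q_i \rceil$, so the identity $L_{\mathbf q}(\omega_i) = -\log_2 q_i$ holds only in the idealized Shannon sense. I would resolve this by explicitly adopting the ideal codeword length as the meaning of $L_{\mathbf q}(\omega)$ and $L_{\mathbf p_m}(\omega)$, which makes the equality exact; alternatively one observes that rounding costs at most one bit per symbol and is washed out by block coding, so the per-symbol expected difference still converges to $D_\Delta(\mathbf p_m || \mathbf q)$. I would also note that the hypothesis that every $q_i$ is positive is exactly what guarantees $\log_2(p_{i|m}/q_i)$ is well defined wherever $p_{i|m} > 0$, so no term of the sum is problematic.
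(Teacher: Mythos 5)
Your proof is correct under the ideal-codelength reading, but it takes a genuinely different route from the paper's. The paper does not compute pointwise: it quotes the Cover--Thomas sandwich bounds
$$
{\mathcal H}(\mathbf p_m) + D_\Delta(\mathbf p_m || \mathbf q) \leq {\mathcal E}\left[L_{\mathbf q}(\omega)\right] < {\mathcal H}(\mathbf p_m) + D_\Delta(\mathbf p_m || \mathbf q) + 1
\quad\text{and}\quad
{\mathcal H}(\mathbf p_m) \leq {\mathcal E}\left[L_{\mathbf p_m}(\omega)\right] < {\mathcal H}(\mathbf p_m) + 1
$$
for coding $\mathbf p_m$-distributed outcomes with a code optimized for the wrong distribution $\mathbf q$ versus the right one, and then subtracts. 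Strictly speaking that subtraction yields only $D_\Delta(\mathbf p_m || \mathbf q) - 1 < {\mathcal E}\left[L_{\mathbf q}(\omega) - L_{\mathbf p_m}(\omega)\right] < D_\Delta(\mathbf p_m || \mathbf q) + 1$, i.e.\ the claimed identity up to one bit, so the paper's argument supports the theorem's exact equality only in the same idealized sense you invoke. Your route --- setting $L_{\mathbf q}(\omega_i) = -\log q_i$ and $L_{\mathbf p_m}(\omega_i) = -\log p_{i|m}$ as ideal Shannon lengths, differencing pointwise to get $\log\left(p_{i|m}/q_i\right)$, and averaging under $\mathbf p_m$ --- delivers the equality exactly, at the cost of declaring up front that $L$ denotes real-valued ideal lengths; your block-coding remark correctly supplies the operational justification for that convention. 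What the paper's route buys is that it works directly with achievable integer-length optimal codes; what yours buys is an exact identity together with an explicit acknowledgment of the integer-rounding gap that the paper's one-line subtraction quietly elides. Your handling of the $p_{i|m} = 0$ terms and your appeal to the positivity of every $q_i$ both match the paper's stated conventions, so no step fails.
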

\begin{proof}
Let ${\mathcal  H}(\mathbf p_m)$ be the Shannon entropy of  $\mathbf p_m$.  Then, per \cite{CoverAndThomas},
$$
{\mathcal H}(\mathbf p_m) + D(\mathbf p_m || \mathbf q) \leq {\mathcal E}\left[L_{\mathbf q}(\omega)\right]< {\mathcal H}(\mathbf p_m) + D_\Delta(\mathbf p_m || \mathbf q) + 1,
$$
and
$$
{\mathcal H}(\mathbf p_m) \leq {\mathcal E}\left[L_{\mathbf p_m}(\omega)\right] <{\mathcal H}(\mathbf p_m) + 1,
$$ 
The desired result follows upon subtracting the second set of inequalities from the first.
\end{proof}
We conclude that $D_\Delta(\mathbf p_m || \mathbf q)$ can reasonably be interpreted as the amount of information that $\Delta$ has ``learned'' from $m$, assuming that 
$\mathbf p_m$ really is a better estimate of the true probablilty distribution of $\omega$ than $\mathbf q$.\\

Next, consider two decision makers, $\Delta$ and $\Delta'$, with respective input messages $m$ and $m'$, estimating the likelihood of random variables $\omega \in \Omega$ and $\omega' \in \Omega'$, respectively.  Suppose:
\begin{itemize}
\item $\{q_{i,i'}\}$ is the set of{\it a priori} joint probabilities regarding outcome $\omega_i \in \Omega$ and, simultaneously, outcome $\omega'_{i'} \in \Omega'$, each without knowledge of respective input messages $m$ and $m'$.
\item the action of both messages changes $\{q_{i,i'}\}$ to the {\it a posteriori} joint probabilities, $\{p_{i,i'|m, m'}\}$.
\item $\{q'_{i'|i}\}$ is the set of prior probabilities of $\Delta'$'s decisions, given the decisions of $\Delta$, but not message $m$, and $\{p'_{i'|i,m,m'}\}$ is the {\it a posteriori} probability distribution of $\Delta'$'s decisions, given the decisions of $\Delta$ {\it and} both messages $m$ and $m'$.  

\end{itemize}
We then have 
\begin{theorem}[Chain Rule for Kullbach-Leibler Divergence]
\label{DivergenceChainRule}
$$
D(\{p_{i,i'|m, m'}\} || \{q_{i,i'}\}) 
= D(\mathbf p_m || \mathbf q) + D(\{p'_{i'|i,m,m'}\} || \{q'_{i'|i}\})
$$
\end{theorem}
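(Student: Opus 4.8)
The plan is to recognize this as the standard chain rule for relative entropy (as in \cite{CoverAndThomas}) and to prove it by the usual factor-and-split argument. The essential observation is that each joint distribution decomposes as a marginal times a conditional: the prior as $q_{i,i'} = q_i\, q'_{i'|i}$ and the posterior as $p_{i,i'|m,m'} = p_{i|m}\, p'_{i'|i,m,m'}$. Once these factorizations are in hand, the logarithm of the ratio of joint distributions splits into a sum of two logarithms, and the double sum defining the joint divergence separates into exactly the two terms on the right-hand side.

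First I would write out the definition of the joint divergence,
$$
D(\{p_{i,i'|m,m'}\}\,||\,\{q_{i,i'}\}) = \sum_{i,i'} p_{i,i'|m,m'}\,\log\frac{p_{i,i'|m,m'}}{q_{i,i'}}.
$$
Next I would substitute the two factorizations and use $\log(ab/cd) = \log(a/c) + \log(b/d)$ to rewrite the summand as $\log(p_{i|m}/q_i) + \log(p'_{i'|i,m,m'}/q'_{i'|i})$. Splitting the double sum accordingly, the first piece collapses by marginalizing the posterior weight over $i'$,
$$
\sum_{i,i'} p_{i,i'|m,m'}\,\log\frac{p_{i|m}}{q_i} = \sum_i \Bigl(\sum_{i'} p_{i,i'|m,m'}\Bigr)\log\frac{p_{i|m}}{q_i} = \sum_i p_{i|m}\,\log\frac{p_{i|m}}{q_i} = D(\mathbf p_m\,||\,\mathbf q).
$$
The second piece retains the weight $p_{i,i'|m,m'}$; substituting the posterior factorization turns it into $\sum_i p_{i|m}\sum_{i'} p'_{i'|i,m,m'}\log(p'_{i'|i,m,m'}/q'_{i'|i})$, which is by definition the conditional relative entropy $D(\{p'_{i'|i,m,m'}\}\,||\,\{q'_{i'|i}\})$. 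Adding the two pieces yields the claim.

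The main obstacle is not the algebra but justifying the two factorizations, and in particular the marginalization $\sum_{i'} p_{i,i'|m,m'} = p_{i|m}$. For the prior this is essentially definitional, since $q'_{i'|i}$ is introduced as the conditional of the joint prior $q_{i,i'}$ given $\omega_i$, so that $q_{i,i'} = q_i\,q'_{i'|i}$ holds by construction. For the posterior the subtlety is that $p_{i|m}$ is $\Delta$'s marginal belief after seeing only $m$, whereas the joint posterior $p_{i,i'|m,m'}$ conditions on both messages; the factorization therefore requires that, once we marginalize over $\omega'$, message $m'$ leaves $\Delta$'s belief about $\omega$ unchanged. This is precisely the modeling assumption implicit in the notation $p_{i|m}$, and I would state it explicitly as the hypothesis under which the chain rule holds, so that the conditional divergence is correctly weighted by the posterior marginal $p_{i|m}$ rather than the prior marginal $q_i$.
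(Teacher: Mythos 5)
Your proof is correct and is precisely the standard factor-and-split argument from Cover and Thomas, which is exactly what the paper relies on: it offers no in-line proof of this theorem, deferring via footnote to \cite{CoverAndThomas}. Your explicit flagging of the marginal-consistency requirement $\sum_{i'} p_{i,i'|m,m'} = p_{i|m}$ is a genuine improvement in care, since the paper leaves that assumption implicit in its notation.
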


\begin{corollary}[Additivity of Kullbach-Leibler Divergence]
\label{KLAdditivity}
In the special case where $\Delta$ and $\Delta'$ are completely independent of each other, {\rm i.e.} that $\{q'_{i'|i}\} = \mathbf q'$ and $\{p'_{i'|i,m,m'}\} =\mathbf p'_{m'}$, 
$$
D(\{p_{i,i'|m,m'}\} || \{q_{i,i'}\})  = D(\mathbf p_m|| \mathbf q) + D(\mathbf p'_{m'}|| \mathbf q'),
$$
\end{corollary}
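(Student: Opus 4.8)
The plan is to derive the Corollary directly from the Chain Rule, Theorem \ref{DivergenceChainRule}, since the independence hypothesis bears only on the second summand. I would begin by quoting the conclusion of the Chain Rule verbatim,
$$
D(\{p_{i,i'|m, m'}\} || \{q_{i,i'}\}) = D(\mathbf p_m || \mathbf q) + D(\{p'_{i'|i,m,m'}\} || \{q'_{i'|i}\}),
$$
and observe that its first two terms already have exactly the form the Corollary requires. All that remains, then, is to show that under the stated independence the conditional divergence $D(\{p'_{i'|i,m,m'}\} || \{q'_{i'|i}\})$ collapses to the unconditional divergence $D(\mathbf p'_{m'} || \mathbf q')$.

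To do this I would unfold that conditional divergence according to its defining formula as the $\mathbf p_m$-weighted average over the index $i$ of the per-outcome divergences,
$$
D(\{p'_{i'|i,m,m'}\} || \{q'_{i'|i}\}) = \sum_i p_{i|m} \sum_{i'} p'_{i'|i,m,m'} \log \left(\frac{p'_{i'|i,m,m'}}{q'_{i'|i}}\right),
$$
and then substitute the independence hypotheses $\{q'_{i'|i}\} = \mathbf q'$ and $\{p'_{i'|i,m,m'}\} = \mathbf p'_{m'}$. Because the inner sum over $i'$ no longer depends on $i$, it factors out of the sum over $i$, leaving $\left(\sum_i p_{i|m}\right)$ multiplied by $\sum_{i'} p'_{i'|m'} \log\left(p'_{i'|m'}/q'_{i'}\right)$. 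Since $\mathbf p_m$ is a probability distribution, $\sum_i p_{i|m} = 1$, and the residual sum over $i'$ is precisely $D(\mathbf p'_{m'} || \mathbf q')$ by \eqref{eq: BasicDef}. Substituting this back into the Chain Rule yields the claimed additivity.

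I do not anticipate any genuine obstacle here: the entire content is the bookkeeping observation that conditioning becomes vacuous once the conditional distributions are constant in $i$, together with the fact that the averaging weights sum to one. The single point that requires care is identifying which distribution supplies the averaging weights in the conditional divergence — it must be the posterior $\mathbf p_m$ inherited from the Chain Rule, not the prior $\mathbf q$ — so that the collapsing factor $\sum_i p_{i|m} = 1$ is the one that actually appears. Stating the defining formula for the conditional divergence explicitly, rather than invoking it by name, is the cleanest way to keep this unambiguous.
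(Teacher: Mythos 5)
Your proof is correct and follows essentially the route the paper itself takes: the paper states this corollary as an immediate consequence of the Chain Rule (Theorem \ref{DivergenceChainRule}), with details deferred to the literature, and your collapse of the conditional divergence $D(\{p'_{i'|i,m,m'}\} \,\|\, \{q'_{i'|i}\})$ to $D(\mathbf p'_{m'} \,\|\, \mathbf q')$ via the $\mathbf p_m$-weighted definition and $\sum_i p_{i|m} = 1$ supplies exactly the omitted bookkeeping. Your closing caution --- that the averaging weights in the conditional divergence come from the posterior $\mathbf p_m$, not the prior $\mathbf q$ --- is the one genuinely delicate point, and you have it right.
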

about which Kullbach writes in \cite{Kullbach} that 
\begin{quote}
\it{``Additivity of information for independent events is intuitively a fundamental requirement, and is indeed postulated in most axiomatic developments of information theory.  Additivity is the basis for the logarithmic form of information.  A sample of $n$ independent observations from the same population {\rm [here $n$ copies of $\Delta$ making $n$ identical and independent decisions]} provides $n$ times the mean information in a single observation.''}
\end{quote}

$D_\Delta(\mathbf p_m || \mathbf q)$ has a few other properties that would seem to be required of a measure of meaning, as stated in the following theorems:

\begin{theorem}[Non-negativity of Kullbach-Leibler Divergence]\label{NonNegativity}
$$
D_\Delta(\mathbf p_m || \mathbf q) \geq 0,
$$
with equality if and only if $\mathbf p_m = \mathbf q$. 
\end{theorem}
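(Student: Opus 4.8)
The plan is to establish this Gibbs-type inequality from the elementary bound $\log x \le (x-1)\log e$, valid for every $x > 0$ with equality exactly when $x = 1$; this is immediate from the concavity of the logarithm, or from a one-line calculus check on $f(x) = (x-1)\log e - \log x$. First I would negate the divergence and drop the vanishing terms, writing
$$
-D_\Delta(\mathbf p_m || \mathbf q) = \sum_{i : p_{i|m} > 0} p_{i|m} \log\left(\frac{q_i}{p_{i|m}}\right),
$$
where the sum now ranges only over indices with $p_{i|m} > 0$, since the stated convention kills the others.

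Next I would apply the logarithmic bound term by term with $x = q_i / p_{i|m}$, which gives
$$
-D_\Delta(\mathbf p_m || \mathbf q) \le (\log e)\sum_{i : p_{i|m} > 0} p_{i|m}\left(\frac{q_i}{p_{i|m}} - 1\right) = (\log e)\left(\sum_{i : p_{i|m} > 0} q_i - 1\right).
$$
Since every $q_i$ is strictly positive and the full collection sums to one over $\Omega$, the partial sum $\sum_{i : p_{i|m} > 0} q_i \le 1$, so the right-hand side is at most zero; hence $D_\Delta(\mathbf p_m || \mathbf q) \ge 0$.

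For the equality claim I would trace back through the two inequalities just used. Equality in the logarithmic bound forces $q_i / p_{i|m} = 1$, i.e. $q_i = p_{i|m}$, at every index with $p_{i|m} > 0$; equality in $\sum_{i : p_{i|m} > 0} q_i \le 1$ forces the remaining $q_i$ to sum to zero, which, since each $q_i > 0$, means no index has $p_{i|m} = 0$. Together these give $\mathbf p_m = \mathbf q$, and conversely $\mathbf p_m = \mathbf q$ makes every summand vanish. The only point needing care -- the modest main obstacle -- is the bookkeeping over indices with $p_{i|m} = 0$ in the equality analysis, and the standing hypothesis that all $q_i > 0$ is precisely what keeps it clean by ruling out the degenerate case in which probability mass is "lost" without registering in the divergence. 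An equivalent route would replace the term-by-term estimate with a single application of Jensen's inequality to the concave map $\log$, with strict concavity supplying the same equality condition.
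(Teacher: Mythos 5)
Your proof is correct, and it is worth noting that the paper itself supplies no proof of this theorem at all: a footnote delegates it (along with the other results of that subsection) to Cover and Thomas, whose standard argument runs through Jensen's inequality applied to the concave logarithm, as in $-D(\mathbf p_m\,\|\,\mathbf q) = \sum_{i\colon p_{i|m}>0} p_{i|m}\log\bigl(q_i/p_{i|m}\bigr) \le \log\bigl(\sum_{i\colon p_{i|m}>0} q_i\bigr) \le 0$, with strict concavity delivering the equality condition. Your route instead uses the pointwise bound $\log x \le (x-1)\log e$ term by term --- the classical Gibbs-inequality argument --- which is logically equivalent but more elementary, since it needs only a one-variable calculus fact rather than Jensen with its equality analysis for strictly concave functions; you even flag the Jensen version as the alternative. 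Two details deserve credit because they are exactly where careless versions of this proof go wrong, and you handle both: the restriction of the sum to the support of $\mathbf p_m$ (respecting the paper's convention that $p_{i|m}\log(p_{i|m}/q_i)=0$ when $p_{i|m}=0$), and the use of the paper's standing hypothesis that every $q_i>0$ to conclude in the equality case that the support is all of $\Omega$, so that $q_i=p_{i|m}$ on the support genuinely yields $\mathbf p_m=\mathbf q$ rather than equality merely where $\mathbf p_m$ charges mass. (Minor observation, not a gap: tightness of the term-by-term bound already forces $\sum_{i\colon p_{i|m}>0} q_i = \sum_{i\colon p_{i|m}>0} p_{i|m} = 1$, so your second tightness condition is implied by the first together with positivity of $\mathbf q$.) In short: a complete, self-contained proof of a statement the paper only cites, by a marginally more elementary method than the cited source.
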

Thus, $m$ always conveys positive pragmatic information, unless $m$ is ignored by $\Delta$.\\

\begin{theorem}[Convexity of Kullbach-Leibler Divergence]
\label{DivergenceConvexity}
Consider the pairs of probability distributions $\mathbf p_m$ and $\mathbf p'_m$ and $\mathbf q$, and $\mathbf q'$.  Form, for any $\lambda \in [0, 1]$, the pair of interpolated distributions 
$$
\mathbf p_m^\lambda = \lambda \mathbf p_m + (1 - \lambda)\mathbf p'_m
$$
and 
$$
\mathbf q^\lambda = \lambda \mathbf q + (1 - \lambda)\mathbf q',
$$
then
$$
D\Big(\mathbf p_m^\lambda || \mathbf q^\lambda\Big) \leq 
\lambda D\left(\mathbf p_m || \mathbf q \right)+ (1-\lambda) D\left(\mathbf p'_m || \mathbf q' \right).
$$
\end{theorem}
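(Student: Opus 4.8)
The plan is to reduce the statement to a term-by-term application of the log-sum inequality. Writing $p_i^\lambda = \lambda p_{i|m} + (1-\lambda) p'_{i|m}$ and $q_i^\lambda = \lambda q_i + (1-\lambda) q'_i$ for the components of the interpolated distributions, the claim is that $\sum_i p_i^\lambda \log(p_i^\lambda / q_i^\lambda)$ is bounded above by the convex combination $\lambda D(\mathbf p_m || \mathbf q) + (1-\lambda) D(\mathbf p'_m || \mathbf q')$. Because each divergence is a sum over the index $i$, and a finite sum of upper bounds is an upper bound for the sum, it suffices to establish the inequality index by index, i.e. for each fixed $i$ separately.

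First I would recall the log-sum inequality: for nonnegative reals $a_1, a_2$ and positive reals $b_1, b_2$ one has $a_1 \log(a_1/b_1) + a_2 \log(a_2/b_2) \geq (a_1 + a_2)\log\big((a_1+a_2)/(b_1+b_2)\big)$. This is itself just Jensen's inequality applied to the strictly convex function $t \mapsto t\log t$, using the weights $b_j/(b_1+b_2)$ at the points $a_j/b_j$. This is the one genuinely non-trivial input, and it is the step I would regard as the crux; everything downstream is bookkeeping.

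Next, for each fixed $i$, I would set $a_1 = \lambda p_{i|m}$, $a_2 = (1-\lambda) p'_{i|m}$, $b_1 = \lambda q_i$, and $b_2 = (1-\lambda) q'_i$. The factors $\lambda$ and $1-\lambda$ cancel inside each logarithm on the left-hand side, so the log-sum inequality reads $\lambda p_{i|m}\log(p_{i|m}/q_i) + (1-\lambda) p'_{i|m}\log(p'_{i|m}/q'_i) \geq p_i^\lambda \log(p_i^\lambda/q_i^\lambda)$. Summing over all $i$ and identifying the three resulting sums as $\lambda D(\mathbf p_m || \mathbf q)$, $(1-\lambda) D(\mathbf p'_m || \mathbf q')$, and $D(\mathbf p_m^\lambda || \mathbf q^\lambda)$ respectively yields the theorem.

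Finally, I would dispose of the boundary cases: the degenerate values $\lambda \in \{0,1\}$ reduce the claim to an identity, and any term with a vanishing numerator is covered by the stated convention $0\log(0/q)=0$, under which the log-sum inequality persists by continuity. An alternative route I would keep in reserve is to observe that $(p,q) \mapsto p\log(p/q)$ is the perspective of the convex map $t \mapsto t\log t$ and is therefore jointly convex on the positive quadrant; joint convexity of each summand, preserved under summation, gives the conclusion immediately. I expect the log-sum argument to be the cleanest to present, so I would lead with it and relegate the perspective argument to a remark.
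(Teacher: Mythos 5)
Your proof is correct, and it coincides with the paper's: the paper gives no proof of this theorem itself but defers to \cite{CoverAndThomas}, whose proof of KL convexity is exactly your argument---the log-sum inequality applied index by index with $a_1 = \lambda p_{i|m}$, $a_2 = (1-\lambda)p'_{i|m}$, $b_1 = \lambda q_i$, $b_2 = (1-\lambda)q'_i$, with the $\lambda$ factors cancelling inside the logarithms. Your handling of the boundary cases and the perspective-function remark are sound but add nothing beyond the cited route.
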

\begin{comment}
In the special case where $\mathbf q = \mathbf q' = \mathbf q_\lambda$ and ${\mathbf u}_j$ and ${\mathbf u}_k$ are unit vectors in the $j^{\rm th}$ and $k^{\rm th}$ direction, respectively, 
\begin{align*}
D\Big(\lambda {\mathbf u}_j+ (1-\lambda) {\mathbf u}_k ||\mathbf q\Big) &\leq \lambda D\left({\mathbf u}_j || \mathbf q \right)+ (1-\lambda) D\left({\mathbf u}_k|| \mathbf q \right) \\
&\leq \max\left(-\log q_j, -\log q_k \right).
\end{align*}
\end{comment}
It is useful at this point to introduce the following
\begin{definition}
A message $m$ is {\bf pragmatically definitive} for a given $\Delta$ if $\mathbf p_m = \mathbf u_k$ for some $k$, {\it i.e.} the unit vector in the $k^{\rm th}$ direction.
\end{definition}
In other words, $\Delta$ is certain that outcome $k$ will be achieved if it receives a pragmatically definitive message to that effect.  
Note that all that pragmatic definitiveness guarantees is the definitive prediction of {\it an} outcome, but not necessarily the desired one!\\

Using this terminology, we have  for given  $\mathbf q$, the following
\begin{corollary}
The maximum value of $D_\Delta(\mathbf p_m || \mathbf q)$ is the maximum of all possible pragmatically definitive messages, 
namely $\max_i (-\log q_i)$.
\end{corollary}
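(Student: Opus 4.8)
The plan is to exploit the convexity just established in Theorem~\ref{DivergenceConvexity}. First I would specialize that theorem to the case $\mathbf q = \mathbf q'$, so that the interpolated denominator $\mathbf q^\lambda = \lambda \mathbf q + (1-\lambda)\mathbf q'$ collapses back to $\mathbf q$ for every $\lambda$. The inequality then reads $D(\lambda \mathbf p_m + (1-\lambda)\mathbf p'_m \,||\, \mathbf q) \le \lambda D(\mathbf p_m||\mathbf q) + (1-\lambda)D(\mathbf p'_m||\mathbf q)$, which is precisely the statement that the map $\mathbf p \mapsto D(\mathbf p||\mathbf q)$ is convex on the probability simplex for fixed $\mathbf q$.

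Next I would observe that the admissible posteriors range over the simplex $\{\mathbf p : p_i \ge 0,\ \sum_i p_i = 1\}$, a compact convex polytope whose extreme points are exactly the unit vectors $\mathbf u_1, \ldots, \mathbf u_N$ --- that is, exactly the pragmatically definitive messages. Since $D(\cdot\,||\,\mathbf q)$ is continuous on this compact set (using the convention $0\log 0 = 0$ and the standing hypothesis that every $q_i > 0$, so each $-\log q_i$ is finite), the maximum is attained. A convex function on a polytope attains its maximum at a vertex, so the global maximum equals $\max_k D(\mathbf u_k||\mathbf q)$, and I would justify this reduction by writing an arbitrary $\mathbf p$ as a convex combination of the $\mathbf u_k$ and applying the convexity inequality above.

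Finally I would evaluate the divergence at a vertex: $D(\mathbf u_k||\mathbf q) = \sum_i (\mathbf u_k)_i \log\frac{(\mathbf u_k)_i}{q_i} = \log\frac{1}{q_k} = -\log q_k$, since $\mathbf u_k$ places all of its mass on coordinate $k$. Taking the maximum over $k$ yields $\max_i(-\log q_i)$, completing the argument and simultaneously identifying the maximizing message as the pragmatically definitive one concentrated on the \emph{a priori} least likely outcome.

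The main obstacle is really just making the ``convex function on a polytope is maximized at a vertex'' step rigorous rather than hand-waving it; I expect the cleanest route is a short Jensen-type induction on the number of nonzero components of $\mathbf p$, rather than invoking general convex-analysis machinery. As an alternative that sidesteps convexity entirely, I could write $D(\mathbf p||\mathbf q) = -\mathcal H(\mathbf p) - \sum_i p_i \log q_i$ and bound it using $\mathcal H(\mathbf p) \ge 0$ together with the fact that $\sum_i p_i(-\log q_i)$ is a weighted average of the values $-\log q_i$ and so cannot exceed their maximum, with equality forcing $\mathcal H(\mathbf p) = 0$ and hence a definitive message. I would nonetheless keep the convexity argument as the primary one, since it is the natural corollary to the theorem immediately preceding it.
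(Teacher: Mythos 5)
Your proposal is correct and follows essentially the same route as the paper's own proof: writing $\mathbf p_m = \sum_i p_{i|m}\mathbf u_i$ and applying the convexity theorem (with $\mathbf q = \mathbf q'$) repeatedly to bound $D(\mathbf p_m||\mathbf q)$ by $\sum_i p_{i|m} D(\mathbf u_i||\mathbf q) = \sum_i p_{i|m}(-\log q_i) \leq \max_i(-\log q_i)$. Your explicit remarks on attainment at the vertex corresponding to the least likely outcome, and the alternative bound via $D(\mathbf p||\mathbf q) = -\mathcal H(\mathbf p) - \sum_i p_i \log q_i$, are sound refinements but do not change the substance of the argument.
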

\begin{proof}
Since $\mathbf p_m = \sum_i p_{i|m} \mathbf u_i$, where $\mathbf u_i$ is a unit vector in the ${\rm i}^{th}$ direction, it follows from repeated applications of Theorem 2.3.4 that 
\begin{align*}
D\Big(\mathbf p_m || \mathbf q\Big)          &=  D\Big(\sum_i p_{i|m} \mathbf u_i || \mathbf q\Big)                        \\
                                                              &\leq \sum_i p_{i|m} D\left(\mathbf u_i || \mathbf q \right)                   \\
                                                              &\leq \max_i \left(-\log q_i \right)
\end{align*}

\end{proof}
\section{The Pragmatic Information of an Ensemble of Messages}
We now assume that $\Delta$ is asked to make a series of independent decisions, each time given the same prior set of beliefs ({\it i.e.} $\mathbf q$ is the same each time), but each time given a (possibly different) 
message $M_k$ from some finite ensemble $\mathcal M$.  
We further assume that each $M_k$ is sampled using the same probability distribution, namely ${\rm Pr}\left\{M_k = m \right\} = \varphi_m$ for all $k$ and 
every $m \in \mathcal M$ is eventually sampled, regardless of how the sampling starts.  Given the additivity of Kullbach-Leibler divergence for independent decisions, 
$M_k$ conveys the (possibly different) amount of pragmatic information $D_\Delta(\mathbf p_{M_k} || \mathbf q)$ for each $k$.  
The average amount of pragmatic information that $\Delta$ accrues per message is then the random variable
$$
\Phi_N = {1 \over N} \sum_{k=1}^N D_\Delta(\mathbf p_{M_k} || \mathbf q).
$$ 
Our assumptions on how the $M_k$'s are sampled are sufficient to guarantee that this sum satisfies Birkoff's ergodic theorem \cite{Billingsley}.  We can therefore conclude that
$$
\lim_{N \rightarrow \infty} \Phi_N = E \left[ D(\mathbf p_M || \mathbf q) \right],
$$
with probability 1, with the expectation taken under the distribution of the $M_k$'s.  
We stress that, as in the previous section, $\Delta$ makes each of its decisions as though none of the previous decisions had ever been made.  However, the hypotheses of the ergodic theorem are sufficiently weak 
that no such assumption is required for the $M_k$'s.  While they could be sampled independently, they could also be a realization of a Markov chain, etc.  The generality of this result therefore justifies the following 
\begin{definition}
The {\bf pragmatic information}, ${\Phi_\Delta (\mathcal M; \Omega)}$,  of an ensemble of messages ${\mathcal M}$ is the expected value 
\begin{align*}
 \Phi_\Delta (\mathcal M; \Omega) &= E \left[ D(\mathbf p_m || \mathbf q) \right] \\
                                      &= \sum_{i, m} \varphi_m  p_{i|m} \log \left(\frac{p_{i|m}}{q_i}\right) \\
                                     &= \sum_{i, m}  p_{i,m} \log \left(\frac{p_{i|m}}{q_i}\right)
\end{align*}
where $p_{i|m}$ remains the conditional probablilty of outcome $\omega_i$, given the receipt of message $m$, so that the joint probability, $p_{i,m}$, of outcome $\omega_i$ upon receiving message $m$ is $p_{i,m} = \varphi_m  p_{i|m}$.  Once again, we take $p_{i|m} \log \left(\frac{p_{i|m}}{q_i}\right) = 0$ if $p_{i|m} =0$.
\end{definition}

\subsection{An Example:  The One Armed Bandit Problem}
It is instructive to apply this definition to the ``one armed bandit problem" discussed above.  Evidently, the relevant decision is whether it is worth playing the slot machine, a decision that is informed by a random variable, $\omega$, that can assume 
exactly one of the two values $\mathtt{PAYOUT}$ and $\mathtt{NO PAYOUT}$, {\it i.e.} $\Omega = \{\mathtt{PAYOUT}, \mathtt{NO PAYOUT}\}$.  We want to know the pragmatic information accruing from a knowledge of $\mathbf 1_{T+1}$, given that the $T$ prior trials have resulted in $w$ wins.  Thus, per \eqref{eq: succession}, we take $\hat \pi(w, T)$ as $q_1$, our {\it a priori} estimate of the payout probability, and $1 - \hat \pi(w, T)$ as $q_0$, our {\it a priori} estimate of the no payout probability.\\  

The ensemble, $\mathcal M_{T+1}$, of possible messages for the $T+1^{st}$ trial is $\mathtt{PAYOUT}_{T+1}$ and $\mathtt{NO PAYOUT}_{T+1}$.  If $\Delta$ receives the $\mathtt{PAYOUT}_{T+1}$ message, 
the {\it a posteriori} $\mathtt{PAYOUT}$ and $\mathtt{NO PAYOUT}$ probabilities are updated to
\begin{align*}
\mathbf \pi_1 &= \Big(\hat \pi(w+1, T+1), 1 - \hat \pi(w+1, T+1)\Big) \cr
                  &= \Big(\frac{w+2}{T+3},\ \frac{T-w+1}{T+3}\Big) \cr
\end{align*}
Similarly, if $\Delta$ receives the $\mathtt{NOPAYOUT}_{T+1}$ message,
the corresponding {\it a posteriori} probability vector is
\begin{align*}
\mathbf \pi_0 &= \Big(\hat \pi(w+1, T+1), 1 - \hat \pi(w+1, T+1)\Big) \cr
                  &= \Big(\frac{w+1}{T+3},\ \frac{T-w+2}{T+3}\Big) \cr
\end{align*}
We then have 
\begin{align*}
D_\Delta(\mathbf \pi_1 || \mathbf q) 
        &= \hat \pi(w+1, T+1) \log\left[\frac{\hat \pi(w+1, T+1)}{\hat \pi(w, T)}\right] + \\
        &  \qquad    \left[1 - \hat \pi(w+1, T+1)\right] \log\left[\frac{1 - \hat \pi(w+1, T+1)}{1 -\hat \pi(w, T)}\right] \\[10pt]
        &=\frac{w+2}{T+3}\log\left[\frac{(w+2)(T+2)}{(T+3)(w+1)}\right] + \frac{T-w+1}{T+3}\log\left[\frac{T+2}{T+3}\right]
\end{align*}
and
\begin{align*}
D_\Delta(\mathbf \pi_0 || \mathbf q) 
     &= \hat \pi(w, T+1) \log\left[\frac{\hat \pi(w, T+1)}{\hat \pi(w, T)}\right] + \\
     &  \qquad    \left[1 - \hat \pi(w, T+1)\right] \log\left[\frac{1 - \hat \pi(w, T+1)}{1 -\hat \pi(w, T)}\right] \\[10pt]
     &=\frac{w+1}{T+3}\log\left[\frac{T+2}{T+3}\right]+\frac{T-w+2}{T+3}\log\left[\frac{(T-w+2)(T+2)}{(T+3)(T-w+1)}\right],
\end{align*}
both of which can be interpreted as the reduction in the uncertainty surrounding the estimate of $\pi$, upon receipt of the corresponding value of $\mathbf 1_{T+1}$.  We conclude that the pragmatic information provided by the $T+1^{st}$ trial is
\begin{align*}
\Phi_\Delta \left(\mathcal M_{T+1}; \Omega\right) 
    &= \pi D_\Delta(\mathbf \pi_1 || \mathbf q) + (1-\pi) D_\Delta(\mathbf \pi_0 || \mathbf q)\\[10pt]
    &=\pi \left\{\frac{w+2}{T+3}\log\left[\frac{(w+2)(T+2)}{(T+3)(w+1)}\right] + \frac{T-w+1}{T+3}\log\left[\frac{T+2}{T+3}\right]\right\}  \\
    &  \qquad + (1-\pi)\left\{\frac{w+1}{T+3}\log\left[\frac{T+2}{T+3}\right] + 
                                     \frac{T-w+2}{T+3}\log\left[\frac{(T-w+2)(T+2)}{(T+3)(T-w+1)}\right]\right\}.
\end{align*}

As $T$ gets large, $w = \pi T + o(T)$ with probability 1, by the Kolmogorov Strong Law of Large Numbers \cite{Feller}. 
 Straightforward algebraic manipulation then shows that the arguments of all of the logarithms
in $\Phi_\Delta \left(\mathcal M_{T+1}; \Omega\right)$ approach 1 monotonically as $T$ increases.  We conclude that $\Phi_\Delta \left(\mathcal M_{T+1}; \Omega\right)$ steadily decreases to zero with increasing $T$.
Figure 1 shows this decrease, with $w = \pi T$, its most likely value.  It also shows that the pragmatic information accruing from each trial is larger as $\pi$ gets closer to 1/2, as we would expect:  
the play/no play decision is less obvious in that case, so knowing the outcome of each additional previous trial is all the more informative.
\begin{figure}[ht]
    \centering
    \includegraphics[width=\textwidth]{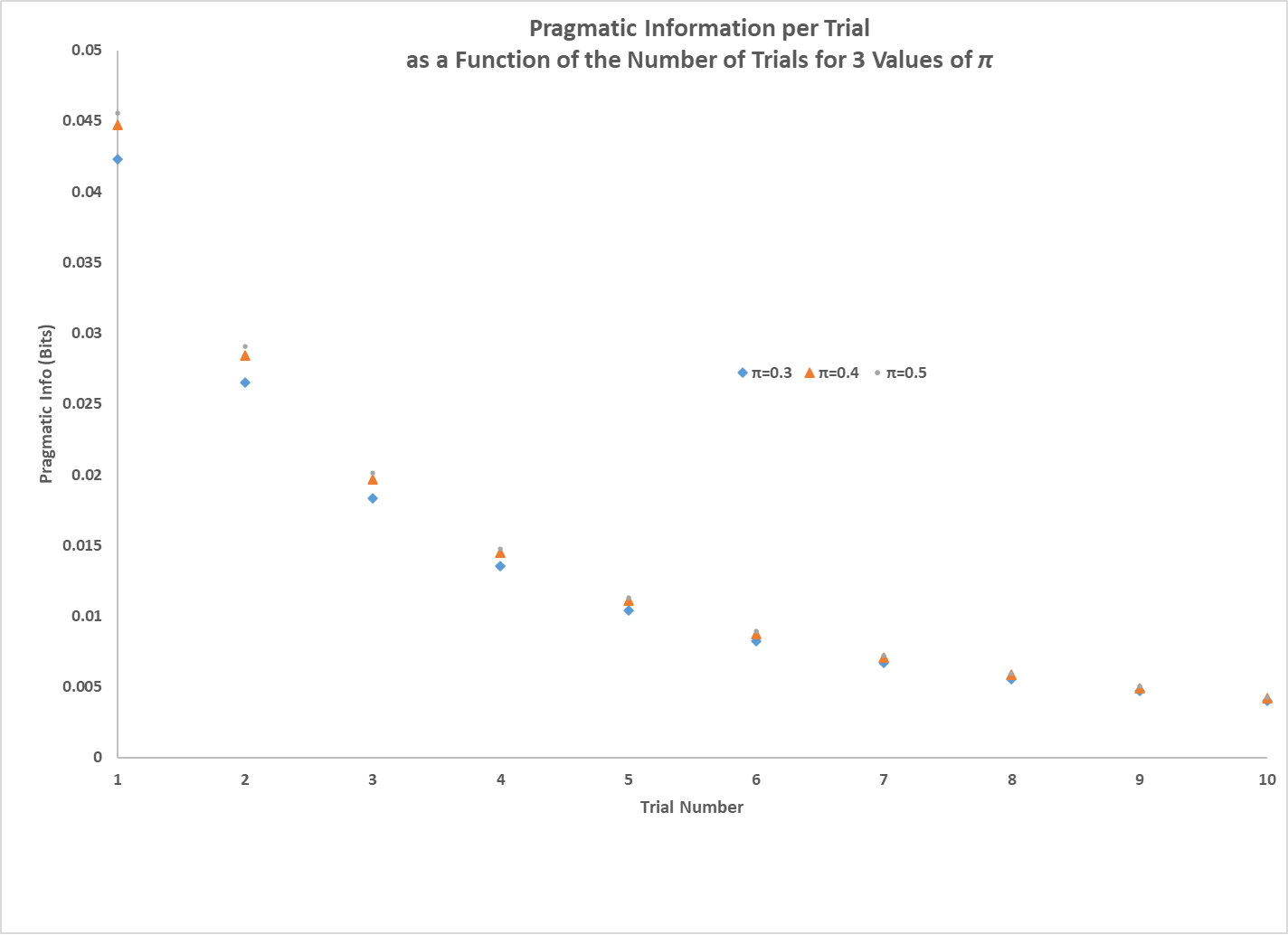} % Replace with your image file
    \caption{Pragmatic information for the ``one armed bandit''}
    \label{fig:sample}
\end{figure}
\subsection{Properties of the Pragmatic Information of an Ensemble of Messages}
Because ${\Phi_\Delta (\mathcal M; \Omega)}$ is just a weighted sum of Kullbach-Leibler divergences, we have immediately
\begin{theorem}[Non-negativity of Pragmatic Information]
$\Phi_\Delta (\mathcal M; \Omega) \ge 0$,
with equality if and only if none of the messages in $\mathcal M$ change $\Delta$'s {\it a priori} estimate of $\omega$.
\end{theorem}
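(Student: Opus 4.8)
The plan is to reduce the claim directly to the single-message result already established in Theorem~\ref{NonNegativity}. First I would rewrite the ensemble definition as the convex combination
$$
\Phi_\Delta(\mathcal M; \Omega) = \sum_{m \in \mathcal M} \varphi_m\, D_\Delta(\mathbf p_m || \mathbf q),
$$
and note that the sampling weights $\varphi_m$ are non-negative and sum to one. Because Theorem~\ref{NonNegativity} already guarantees $D_\Delta(\mathbf p_m || \mathbf q) \ge 0$ for every $m$, each summand is a product of a non-negative weight and a non-negative divergence. The inequality $\Phi_\Delta(\mathcal M; \Omega) \ge 0$ then follows immediately, with no further estimation required.

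For the equality characterization I would use the elementary fact that a sum of non-negative terms vanishes if and only if every term carrying a strictly positive coefficient vanishes. Here the structural hypothesis introduced with the ensemble --- that every $m \in \mathcal M$ is eventually sampled regardless of how the sampling starts --- forces $\varphi_m > 0$ for all $m \in \mathcal M$. Hence $\Phi_\Delta(\mathcal M; \Omega) = 0$ exactly when $D_\Delta(\mathbf p_m || \mathbf q) = 0$ for every $m$, and by the equality clause of Theorem~\ref{NonNegativity} this occurs precisely when $\mathbf p_m = \mathbf q$ for each $m$, i.e.\ when none of the messages in $\mathcal M$ alters $\Delta$'s {\it a priori} estimate of $\omega$.

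The single point demanding care --- and the one easy to overlook --- is precisely this last step. The hard part is not the inequality, which is automatic, but justifying the clean phrasing of the equality condition: without the ``every message is eventually sampled'' assumption one could have $\Phi_\Delta = 0$ while some message with $\varphi_m = 0$ still shifted the posterior away from $\mathbf q$. That assumption is exactly what promotes the statement from ``no positively weighted message changes $\mathbf q$'' to the stronger ``none of the messages in $\mathcal M$ changes $\mathbf q$,'' so I would flag its use explicitly rather than dispatch the equality case as wholly routine.
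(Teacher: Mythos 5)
Your proposal is correct and follows essentially the same route as the paper, which simply observes that $\Phi_\Delta(\mathcal M;\Omega)$ is a weighted sum of Kullbach--Leibler divergences and invokes their non-negativity. Your explicit remark that the ``every message is eventually sampled'' hypothesis forces $\varphi_m > 0$ --- and is therefore what licenses the equality clause as stated over \emph{all} of $\mathcal M$ --- is a point the paper leaves implicit, and is a worthwhile clarification rather than a deviation.
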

The following definition will be useful in the discussion below:
\begin{definition}
A message ensemble $\mathcal M$ is {\bf pragmatically definitive} for a given $\Delta$ if for all $m \in \mathcal M$, $m$ is pragmatically definitive.
\end{definition}
For such ensembles, we have $p_{i|m} = \delta_{i, k(m)}$, {\it i.e.} $p_{i|m} =1$ if $i = k(m)$, the outcome to which $m$ is mapped deterministically and zero otherwise.  It follows that
\begin{align*}
 \Phi_\Delta (\mathcal M; \Omega) &= \sum_{i, m} \varphi_m  p_{i|m} \log \left(\frac{p_{i|m}}{q_i}\right) \\
                                                  &= \sum_{i, m} \varphi_m \delta_{i, k(m)} \log \left(\frac{\delta_{i, k(m)}}{q_i}\right) \\
                                                  &= - \sum_m \varphi_m \log q_{k(m)}
\end{align*}
for such ensembles.  Since the $\varphi$'s sum to 1, we also have
\begin{theorem}[Upper Bound on Pragmatic Information]
\begin{align*}
\Phi_\Delta (\mathcal M; \Omega) &\leq -\sum_m \varphi_m \log q_{k(m)} \\
                                                 &\leq \max_i \left(-\log q_i \right)\\
\end{align*}
\end{theorem}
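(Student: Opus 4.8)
The plan is to prove the two displayed inequalities in turn, each reducing to the single-message convexity bound already used in the Corollary to Theorem~\ref{DivergenceConvexity}. The essential point is to read $k(m)$, for a general message $m$ that need not be pragmatically definitive, as the prior-rarest outcome \emph{among those the posterior leaves possible}: that is, $k(m)$ is the index $i$ in the support $\{i : p_{i|m} > 0\}$ at which $q_i$ is smallest, equivalently at which $-\log q_i$ is largest, so that $-\log q_{k(m)} = \max_{i:\,p_{i|m}>0}(-\log q_i)$. This reading is message-dependent and reduces to the earlier definition when $m$ is pragmatically definitive, since then the support of $\mathbf p_m$ is the single index $k(m)$.

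First I would establish the per-message bound
$$
D_\Delta(\mathbf p_m \,||\, \mathbf q) \;\le\; -\log q_{k(m)}.
$$
Writing $\mathbf p_m = \sum_i p_{i|m}\,\mathbf u_i$ and invoking the joint convexity of Theorem~\ref{DivergenceConvexity} exactly as in the Corollary, with every second argument equal to $\mathbf q$, gives
$$
D_\Delta(\mathbf p_m \,||\, \mathbf q) \;\le\; \sum_i p_{i|m}\, D(\mathbf u_i \,||\, \mathbf q) \;=\; \sum_i p_{i|m}\,(-\log q_i),
$$
where $D(\mathbf u_i \,||\, \mathbf q) = -\log q_i$. The right-hand side is a $\mathbf p_m$-weighted average of the numbers $-\log q_i$ taken over the support of $\mathbf p_m$, and a weighted average never exceeds its largest entry; that largest entry is, by the definition above, exactly $-\log q_{k(m)}$.

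The first inequality of the theorem follows by taking the $\varphi_m$-weighted average of this per-message bound over the ensemble:
$$
\Phi_\Delta(\mathcal M;\Omega) = \sum_m \varphi_m\, D_\Delta(\mathbf p_m \,||\, \mathbf q) \;\le\; \sum_m \varphi_m\,(-\log q_{k(m)}) = -\sum_m \varphi_m \log q_{k(m)}.
$$
For the second inequality, observe that $-\log q_{k(m)} \le \max_i(-\log q_i)$ for every single $m$, since $k(m)$ is one particular index; averaging against the weights $\varphi_m$, which sum to $1$, preserves the bound and yields $-\sum_m \varphi_m \log q_{k(m)} \le \max_i(-\log q_i)$.

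I expect the only real obstacle to be fixing the meaning of $k(m)$ in the non-definitive case. Reading it as the global prior-rarest index would replace the intermediate quantity by $\max_i(-\log q_i)$ from the outset and so establish only the weaker outer bound, collapsing the tighter first inequality. Restricting the minimization to the posterior support is precisely what keeps the intermediate bound sharp and makes it agree with the pragmatically definitive computation displayed just before the theorem; every remaining step is the routine ``weighted average does not exceed the maximum'' inequality.
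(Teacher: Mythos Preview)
Your argument is correct, but it proves more than the paper does. In the paper the theorem is introduced by the sentence ``Since the $\varphi$'s sum to 1, we also have'', immediately following the explicit computation $\Phi_\Delta(\mathcal M;\Omega) = -\sum_m \varphi_m \log q_{k(m)}$ for \emph{pragmatically definitive} ensembles. In that reading $k(m)$ is already unambiguous (the single outcome to which the definitive message $m$ is mapped), the first displayed line is in fact an equality, and the only content is the second inequality, which is the one-line ``a $\varphi_m$-weighted average does not exceed the maximum'' observation. That is the entirety of the paper's proof.

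You have instead read the statement as a bound for \emph{arbitrary} ensembles, which forces you to assign a meaning to $k(m)$ when $m$ is not definitive. Your choice---the prior-rarest outcome in the posterior support---is natural, reduces to the paper's $k(m)$ in the definitive case, and turns the first line into a genuine per-message convexity bound (via the Corollary to Theorem~\ref{DivergenceConvexity}) rather than an equality. So your proof is a correct strict generalization: the paper's argument is the specialization to definitive ensembles, while yours recovers that case and also handles the non-definitive one by explicitly invoking convexity and then averaging.
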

and
\begin{theorem}[Wrong Code Theorem for Pragmatic Information]
$\Phi_\Delta (\mathcal M; \Omega)$ is the expected number of extra bits required to code samples from $\mathbf p_m$, averaged over $m \in \mathcal M$, using a binary encoding optimized for $\bf q$, rather than,
for each $m$, a binary encoding optimized for that $\mathbf p_m$.
\end{theorem}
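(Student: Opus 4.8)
The plan is to derive this as the ensemble-average analogue of the single-message Wrong Code Theorem (Theorem~\ref{WrongCode}), leveraging the fact that $\Phi_\Delta(\mathcal M; \Omega)$ is by definition the $\varphi$-weighted expectation of the single-message divergences $D_\Delta(\mathbf p_m \| \mathbf q)$. The key observation is that Theorem~\ref{WrongCode} already tells us, for each fixed $m$, that $D_\Delta(\mathbf p_m \| \mathbf q) = {\mathcal E}[L_{\mathbf q}(\omega) - L_{\mathbf p_m}(\omega)]$, where the expectation is taken under $\mathbf p_m$. So the bulk of the work is reduced to correctly averaging these per-message quantities over the ensemble and interpreting the result operationally.

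First I would fix the coding interpretation precisely. For each message $m$, we have two candidate prefix-free binary codes: a code $C_{\mathbf p_m}$ optimized for the posterior $\mathbf p_m$ (the ``right code'' for that message), and the single code $C_{\mathbf q}$ optimized for the prior $\mathbf q$ (the ``wrong code,'' used uniformly across all messages). The theorem asserts that $\Phi_\Delta(\mathcal M; \Omega)$ equals the expected penalty, averaged over $m \sim \varphi$ and then over $\omega \sim \mathbf p_m$, for using $C_{\mathbf q}$ in place of the message-specific optimal code.

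Second I would carry out the averaging explicitly. Starting from the definition,
\begin{align*}
\Phi_\Delta(\mathcal M; \Omega) &= \sum_m \varphi_m\, D_\Delta(\mathbf p_m \| \mathbf q) \\
 &= \sum_m \varphi_m\, {\mathcal E}_{\mathbf p_m}\!\left[L_{\mathbf q}(\omega) - L_{\mathbf p_m}(\omega)\right],
\end{align*}
where the inner substitution is exactly Theorem~\ref{WrongCode} applied message-by-message. The outer sum over $\varphi_m$ then turns this into a single expectation over the joint law $p_{i,m} = \varphi_m\, p_{i|m}$, so that $\Phi_\Delta(\mathcal M; \Omega)$ is the expected number of \emph{extra} bits incurred by coding each realized outcome with $C_{\mathbf q}$ rather than with the $C_{\mathbf p_m}$ appropriate to the message that produced it. This is precisely the asserted operational statement.

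The main obstacle I expect is conceptual rather than computational: one must be careful that the expectations in the per-message bounds of Theorem~\ref{WrongCode} are each taken under the \emph{correct} posterior $\mathbf p_m$, not under a common distribution, so that the weighted sum genuinely corresponds to sampling $\omega$ from $\mathbf p_m$ after $m$ has been drawn from $\varphi$. A secondary subtlety is the $\pm 1$ slack inherent in the Kraft--McMillan bounds underlying Theorem~\ref{WrongCode}: strictly speaking, the statement is an equality only in the idealized sense (or up to the usual one-bit rounding, which vanishes in the block-coding limit), so I would note that ``expected number of extra bits'' should be read in the same asymptotic/idealized sense as the single-message version. Once these two points are handled, the result is simply linearity of expectation applied to the established single-message identity.
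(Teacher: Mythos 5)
Your proposal is correct and matches the paper's intended argument: the paper states this theorem without a separate proof precisely because, as you observe, it follows from the definition $\Phi_\Delta(\mathcal M;\Omega)=\sum_m \varphi_m\, D_\Delta(\mathbf p_m \,\|\, \mathbf q)$ together with the single-message Wrong Code Theorem applied message-by-message and linearity of expectation under the joint law $p_{i,m}=\varphi_m\, p_{i|m}$. Your caveat about the one-bit Kraft--McMillan slack is well taken --- the paper's single-message theorem already carries that idealization implicitly, and your reading of ``extra bits'' in the same asymptotic sense is the right one.
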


Next, we generalize joint and conditional Kullbach-Leibler Divergence.  Suppose:

\begin{itemize}
\item decision makers $\Delta$ and $\Delta'$ have respective ``state of the world'' variables $\omega \in \Omega$ and $\omega' \in \Omega'$,  
\item $\{q_{i,i'}\}$ is the set of {\it a priori} joint probabilities that $\Delta$ assigns to $\omega_i \in \Omega$ and $\Delta'$ assigns to $\omega'_{i'} \in \Omega'$, prior to receipt of respective input messages $m$ and $m'$,
\item the action of both messages changes the joint probabilities of outcomes $\omega_i$ and $\omega'_{i'}$ from the prior probabilities, $\{q_{i,i'}\}$, to the {\it a posteriori} joint probabilities, $\{p_{i,i'|m, m'}\}$, and 
\item $\{q'_{i'|i}\}$ is the set of prior probabilities of $\Delta'$'s decisions, given the decisions of $\Delta$, but not message $m$, and $\{p'_{i'|i,m,m'}\}$ is the {\it a posteriori} probability distribution of $\Delta'$'s decisions, given the decisions of $\Delta$ {\it and} both messages $m$ and $m'$.  
\item The joint probability that $\Delta$ receives message $m$ and $\Delta'$ receives message $m'$ is $\varphi_{m, m'}$.
\item The joint probability that these messages are received and the decision maker's output $\omega_i$ and $\omega'_{i'}$, respectively, is $p_{i, i', m, m'}$.  
\end{itemize}
We then have the following
\begin{definition}
The {\bf joint pragmatic information} of messages in ${\mathcal M}$ acting on $\Delta$ and messages in ${\mathcal M'}$ acting on $\Delta'$ is
\begin{align*}
\Phi_{\Delta, \Delta'} (\mathcal M, \mathcal M'; \Omega, \Omega') &= E \left[ D(\{p_{i,i'|m, ms'}\} || \{q_{i,i'}\}) \right] \\
                                     &= \sum_{i, i', m, m'} \varphi_{m, m'}  p_{i, i'|m, m'} \log \left(\frac{p_{i,i'|m, m'}}{q_{i,i'}}\right)\\                                 
                                     &= \sum_{i, i', m, m'} p_{i, i', m, m'} \log \left(\frac{p_{i,i'|m, m'}}{q_{i,i'}}\right)  \\
\end{align*}
\end{definition}
and  
\begin{definition}
The {\bf conditional pragmatic information} of messages in ${\mathcal M'}$ acting upon $\Delta'$, given that $m \in \mathcal M$ acting on $\Delta$, as well as the resulting outcome $\omega$ is
\begin{align*}
 \Phi_{\Delta'| \Delta} (\mathcal M'; \Omega' | \mathcal M, \Omega) &= E \left[D(\{p_{i'|i,M,M'}\} || \{q_{i'|i}\}) \right] \\
                                      &= \sum_{i, i', m, m'} p_{i, i', m, m'} \log \left(\frac{p'_{i'|i,m,m'}}{q_{i'|i}}\right).
\end{align*}
\end{definition}
Since the Kullbach-Leibler divergence of each $m \in \mathcal M$ satisfies the Chain Rule for Kullbach-Leibler divergence, the generalization of that Rule follows immediately:
\begin{theorem}[Chain Rule for Pragmatic Information]
\label{PragmaticInfoChainRule}
Suppose each $m \in \mathcal M$ and $m' \in \mathcal M'$ acts, respectively, on decision makers $\Delta$ and $\Delta'$,   with respective prior decision probabilities $\mathbf q$ and $\mathbf q'$ and with corresponding {\rm a posteriori} decision probabilities $\mathbf p_m$ and $\mathbf p'_{m'}$, then
$$
 \Phi_{\Delta, \Delta'} (\mathcal M; \Omega, \Omega')=\Phi_{\Delta} ({\mathcal M}; \Omega)  +
\Phi_{\Delta'| \Delta} (\mathcal M'; \Omega' | \mathcal M, \Omega)
$$
\end{theorem}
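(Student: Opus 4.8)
The plan is to reduce this ensemble-level identity to the single-message Chain Rule for Kullbach-Leibler Divergence (Theorem \ref{DivergenceChainRule}) by averaging over the message ensemble. First I would fix an arbitrary pair of messages $(m, m')$ and invoke Theorem \ref{DivergenceChainRule}, which gives the pointwise identity
$$
D(\{p_{i,i'|m, m'}\} || \{q_{i,i'}\}) = D(\mathbf p_m || \mathbf q) + D(\{p'_{i'|i,m,m'}\} || \{q'_{i'|i}\}).
$$
This holds for each $(m,m')$ precisely because the joint posterior factorizes as $p_{i,i'|m,m'} = p_{i|m}\, p'_{i'|i,m,m'}$ and the joint prior as $q_{i,i'} = q_i\, q'_{i'|i}$, so that the conditional divergence term is understood as $\sum_i p_{i|m}\sum_{i'} p'_{i'|i,m,m'}\log(p'_{i'|i,m,m'}/q'_{i'|i})$.

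Next I would multiply this identity by the joint message probability $\varphi_{m,m'}$ and sum over all $(m,m')$, i.e.\ take the expectation under the message ensemble. By linearity of expectation the left-hand side becomes $\Phi_{\Delta,\Delta'}(\mathcal M, \mathcal M'; \Omega, \Omega')$ directly from its definition, since the pointwise $D(\{p_{i,i'|m,m'}\}||\{q_{i,i'}\})$ weighted by $\varphi_{m,m'}$ reproduces exactly the defining double sum. The conditional term on the right, upon using $p_{i,i',m,m'} = \varphi_{m,m'}\,p_{i|m}\,p'_{i'|i,m,m'}$, collapses term-for-term into the defining expression for $\Phi_{\Delta'|\Delta}(\mathcal M'; \Omega'|\mathcal M, \Omega)$, so no further manipulation is needed there.

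The one step requiring care is the remaining right-hand term. Because $D(\mathbf p_m || \mathbf q)$ depends only on $m$, its expectation is $\sum_{m}\big(\sum_{m'}\varphi_{m,m'}\big)D(\mathbf p_m||\mathbf q)$, and I would identify this with $\Phi_\Delta(\mathcal M;\Omega)$ only after checking the marginalization consistency $\sum_{m'}\varphi_{m,m'} = \varphi_m$, so that the marginal message distribution seen by $\Delta$ in the joint setting agrees with the distribution used in the original single-ensemble definition. This bookkeeping---matching marginals and folding the outcome-index sums back into the joint probabilities $p_{i,i',m,m'}$---is the main (though routine) obstacle; once it is discharged, the three averaged terms assemble into the claimed chain rule.
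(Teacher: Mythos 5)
Your proposal is correct and follows exactly the route the paper takes: the paper offers no explicit proof, simply remarking that the result ``follows immediately'' by averaging the single-message Chain Rule for Kullbach-Leibler Divergence (Theorem \ref{DivergenceChainRule}) over the message ensemble, which is precisely your argument. Your additional bookkeeping---checking $\sum_{m'}\varphi_{m,m'}=\varphi_m$ and folding the weights into $p_{i,i',m,m'}$---fills in the details the paper leaves tacit.
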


As an example, suppose $\Delta$ and $\Delta'$ have the
respective Boolean state of the world variables $\omega$ and $\omega'$, and, {\it a priori}, we know nothing about them.  Thus, we take $q_i = q_{i'} = 1/2$ and $q_{i,i'} = q_i q_{i'} = 1/4$ 
for all values of $i$ and $i'$.  Suppose further that our knowledge is updated by Boolean messages $M$ and $M'$ according to the formulas $\omega = M$ and $\omega' = M\ \cap \ M'$ with probability one. 
Finally, suppose that $M$ and $M'$ are independent random variables, with $M$ assuming the values 1 and 0 with equal probability, and $M'$ assuming the value 1 with probability ${1 \over 4}$ and 0 with 
probability ${3 \over 4}$.  It follows that $p_{i, i'|m, m'} = 1$ for $(i, i', m, m') = (0, 0, 0, 0)$, $(i, i', m, m') = (0, 0, 0, 1)$, $(i, i', m, m') = (1, 0, 1, 0)$, and $(i, i', m, m') = (1, 1, 1, 1)$ and zero otherwise.  We then have
\begin{align*}
\Phi_{\Delta, \Delta'} (\mathcal M, \mathcal M'; \Omega, \Omega')
		&= \sum_{i, i', m, m'} \varphi_{m, m'}  p_{i, i'|m, m'} \log \left(\frac{p_{i,i'|m, m'}}{q_{i,i'}}\right)\\
           &=\left(\frac{1}{2}\right) \left(\frac{3}{4}\right) (1) \log\left[ \frac{1}{\left(\frac{1}{2}\right) \left(\frac{1}{2}\right)}\right] 
                \ \ \textrm{ (the $(i, i', m, m') = (0, 0, 0, 0)$ term)} \\
           &\hphantom{=\ } +\left(\frac{1}{2}\right) \left(\frac{1}{4}\right) (1) \log\left[ \frac{1}{\left(\frac{1}{2}\right) \left(\frac{1}{2}\right)}\right] 
                \ \ \textrm{ (the $(i, i', m, m') = (0, 0, 0, 1)$ term)} \\
           &\hphantom{=\ } +\left(\frac{1}{2}\right) \left(\frac{3}{4}\right) (1) \log\left[ \frac{1}{\left(\frac{1}{2}\right) \left(\frac{1}{2}\right)}\right] 
                \ \ \textrm{ (the $(i, i', m, m') = (1, 0, 1, 0)$ term)} \\
           &\hphantom{=\ } +\left(\frac{1}{2}\right) \left(\frac{1}{4}\right) (1) \log\left[ \frac{1}{\left(\frac{1}{2}\right) \left(\frac{1}{2}\right)}\right] 
                \ \ \textrm{ (the $(i, i', m, m') = (1, 1, 1, 1)$ term)} \\
           &= 2.
\end{align*}

In this example, $p_{i, i'|m, m'} = p_{i' |i, m, m'}$ for all values of $i, i', m$ and $m'$ because $m$ and $m'$ completely determine $i'$ without any need 
to know $i$.  Also, $q_{i'|i} = q_{i'}$
because $q_i$ and $q_{i'}$ are independent.  $\Phi_{\Delta'| \Delta} (\mathcal M'; \Omega' | \mathcal M, \Omega)$ can therefore be calculated 
exactly as above, except that the denominator of the quantity in the logarithm evaluates to 1/2 instead of the 1/4 that appears in the above.  
We conclude that $\Phi_{\Delta'| \Delta} (\mathcal M'; \Omega' | \mathcal M, \Omega) = 1$ in this example.\\

A similar, albeit simpler calculation shows that $\Phi_\Delta (\mathcal M; \Omega) = 1$ in the example, as must be the case, given the above chain rule.
More surprising is the result that 
\begin{align*}
\Phi_{\Delta'} (\mathcal M'; \Omega') &= 3/4\\
                                                      &< \Phi_{\Delta'| \Delta} (\mathcal M'; \Omega' | \mathcal M, \Omega) \\
                                                      &=1.
\end{align*}
In other words, conditioning need not reduce pragmatic information, as it does with Shannon entropy, because the receipt of $M$ can provide a context in which $M'$ has become more informative.

\subsection{Pragmatic Independence of Messages}

It is well known that the joint Shannon entropy of two random variables is the sum of their individual entropies if and only if the random variables are probabalistically independent.  Here, we consider the corresponding situation for
pragmatic information, {\it i.e.} conditions under which 
$$
\Phi_{\Delta, \Delta'} (\mathcal M, \mathcal M'; \Omega, \Omega')=\Phi_{\Delta} ({\mathcal M}; \Omega)  +
\Phi_{\Delta'} ({\mathcal M'}; \Omega')   
$$
holds.  If so, we say that messages $M$ and $M'$ are {\bf pragmatically independent} with respect to their respective decision makers, $\Delta$ and $\Delta'$, with the understanding
that $\Delta$ and $\Delta'$ could coincide.  The following corollary to Theorem 3.2.3 gives a sufficient, but not necessary condition for such pragmatic independence:

\begin{corollary}[Additivity of Pragmatic Information]
If, in addition, to the hypotheses of  Theorem 3.2.4, we have
\begin{align}
q'_{i'|i} = q'_{i'} {\it \quad and \quad} p_{i, i', m, m'} = p_{i, m} p'_{i', m'}  {\it \ for \ all \ } i, i', m, m', 
\label{eq:PragIndependenceCond}
\end{align}
then 
\begin{align}
 \Phi_{\Delta, \Delta'} (\mathcal M, \mathcal M'; \Omega, \Omega')=\Phi_{\Delta} ({\mathcal M}; \Omega)  +
\Phi_{\Delta'} ({\mathcal M'}; \Omega').   
\label{eq:PragIndependence}
\end{align}
\end{corollary}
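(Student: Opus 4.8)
The plan is to lean on the Chain Rule for Pragmatic Information (Theorem~\ref{PragmaticInfoChainRule}), which already supplies
$$
\Phi_{\Delta, \Delta'} (\mathcal M, \mathcal M'; \Omega, \Omega')=\Phi_{\Delta} ({\mathcal M}; \Omega) + \Phi_{\Delta'| \Delta} (\mathcal M'; \Omega' | \mathcal M, \Omega).
$$
Comparing this with the desired identity \eqref{eq:PragIndependence}, the whole corollary reduces to the single claim that, under the hypotheses \eqref{eq:PragIndependenceCond}, the conditional term degenerates into the unconditional one, namely
$$
\Phi_{\Delta'| \Delta} (\mathcal M'; \Omega' | \mathcal M, \Omega) = \Phi_{\Delta'} (\mathcal M'; \Omega').
$$
Once this is established, substituting it into the chain rule yields \eqref{eq:PragIndependence} at once.

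First I would simplify the conditional a posteriori distribution $p'_{i'|i,m,m'}$ that appears in the definition of $\Phi_{\Delta'|\Delta}$. Writing it as $p'_{i'|i,m,m'} = p_{i,i',m,m'}\big/\sum_{j'} p_{i,j',m,m'}$ and inserting the factorization $p_{i,i',m,m'} = p_{i,m}\,p'_{i',m'}$ from \eqref{eq:PragIndependenceCond}, the $\Delta$-factor $p_{i,m}$ cancels between numerator and denominator, leaving $p'_{i'|i,m,m'} = p'_{i',m'}\big/\sum_{j'} p'_{j',m'} = p'_{i'|m'}$, independent of both $i$ and $m$. The second hypothesis, $q'_{i'|i} = q'_{i'}$, removes the conditioning from the prior in the same way, so the logarithm in the summand reduces to $\log\!\left(p'_{i'|m'}/q'_{i'}\right)$.

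With both substitutions in place, the defining sum becomes
$$
\Phi_{\Delta'| \Delta} (\mathcal M'; \Omega' | \mathcal M, \Omega) = \sum_{i, i', m, m'} p_{i, m}\, p'_{i', m'} \log \left(\frac{p'_{i'|m'}}{q'_{i'}}\right),
$$
whose summand now splits into an $(i,m)$-factor and an $(i',m')$-factor. Performing the sum over $i$ and $m$ first and using $\sum_{i,m} p_{i,m} = 1$ (the marginal of the joint distribution over the $\Delta$-variables is itself a probability distribution) leaves exactly $\sum_{i', m'} p'_{i', m'} \log\!\left(p'_{i'|m'}/q'_{i'}\right)$, which is the ensemble definition of $\Phi_{\Delta'} (\mathcal M'; \Omega')$. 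This is the required identity.

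I expect the only genuinely non-routine step to be the reduction $p'_{i'|i,m,m'} = p'_{i'|m'}$. The hypotheses in \eqref{eq:PragIndependenceCond} are stated at the level of the full joint distribution and of the prior, so one must actually verify that the induced a posteriori conditional loses its dependence on the conditioning variables $i$ and $m$, rather than taking that for granted; the cancellation of $p_{i,m}$ against the marginalized denominator is precisely what makes this go through. Everything afterward --- cancelling the unit marginal and recognizing the surviving double sum as $\Phi_{\Delta'}(\mathcal M';\Omega')$ --- is routine bookkeeping.
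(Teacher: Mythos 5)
Your proposal is correct, but it takes a different route from the paper's own proof. The paper never invokes its chain rule (Theorem \ref{PragmaticInfoChainRule}) in the proof, despite billing the statement as a corollary to it: instead it argues directly on the joint quantity, first marginalizing the factorization hypothesis over $i$ and $i'$ to get $\varphi_{m,m'} = \varphi_m \varphi_{m'}$, then deducing $p_{i,i'|m,m'} = p_{i|m}\,p'_{i'|m'}$, and finally splitting the logarithm $\log\bigl(p_{i|m}p'_{i'|m'}/(q_i q'_{i'})\bigr)$ inside the defining sum for $\Phi_{\Delta,\Delta'}$ so that the sum separates into the two ensemble terms (implicitly using $q_{i,i'} = q_i\,q'_{i'}$, which follows from $q'_{i'|i} = q'_{i'}$). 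You instead take the chain rule as a black box and reduce everything to showing $\Phi_{\Delta'|\Delta}(\mathcal M';\Omega'|\mathcal M,\Omega) = \Phi_{\Delta'}(\mathcal M';\Omega')$, which you establish by verifying $p'_{i'|i,m,m'} = p'_{i'|m'}$ via the cancellation of $p_{i,m}$ in the conditional, then summing out the unit marginal $\sum_{i,m} p_{i,m} = 1$. Your route is arguably the more natural one given how the result is advertised, and it isolates the conceptual content --- under the hypotheses \eqref{eq:PragIndependenceCond} conditioning on $\Delta$'s side becomes vacuous; it also sidesteps any explicit need for $\varphi_{m,m'} = \varphi_m\varphi_{m'}$, which your cancellation replaces with $\sum_{j'} p'_{j',m'} = \varphi_{m'}$. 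The paper's direct computation is shorter and self-contained, not relying on the correctness of the earlier theorem. One small point of hygiene in your argument: the cancellation of $p_{i,m}$ tacitly assumes $p_{i,m} > 0$, but terms with $p_{i,m} = 0$ contribute nothing to the sum under the paper's stated convention, so this is harmless; it would be worth a sentence in a final write-up.
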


\begin{proof}
Summing over $i$ and $i'$, we conclude that $\varphi_{m, m'} = \varphi_m \varphi_{m'}$.  Thus, 
\begin{align*}
p_{i, i' | m, m'} &= \frac{p_{i, i', m, m'}}{\varphi_{m, m'}} \\
                       &= \left[\frac{p_{i, m}}{\varphi_{m}}\right] \left[\frac{p'_{i', m'}}{\varphi_{m'}}\right] \\
                       &= p_{i|m} p'_{i'|m'}.  \\
\end{align*}
The result follows, since
\begin{align*}
\Phi_{\Delta, \Delta'} (\mathcal M, \mathcal M'; \Omega, \Omega') 
                       &= \sum_{i, i', m, m'} p_{i, i', m, m'} \log \left(\frac{p_{i,i'|m, m'}}{q_{i,i'}}\right)  \\
                       &= \sum_{i, i', m, m'}p_{i, m} p'_{i', m'} \log \left(\frac{p_{i|m} p_{i'|m'}}{q_{i}q_{i'}}\right)  \\
                       &= \Phi_{\Delta} ({\mathcal M}; \Omega)  + \Phi_{\Delta'} ({\mathcal M'}; \Omega').\\
\end{align*}
\end{proof}
\eqref{eq:PragIndependenceCond} is not necessary, because any message $m$ is pragmatically independent of any other message $m' \in \mathcal M$ for which 
$D_{\Delta'}(\mathbf p_{m'} || \mathbf q) = 0$ for all $m' \in \mathcal M'$.\\

Nor is mere probabalistic independence of $m$ and $m'$ sufficient to guarantee pragmatic independence, as can be seen from the example in the previous subsection.  There, we see that
$$
\Phi_{\Delta, \Delta'} (\mathcal M, \mathcal M'; \Omega, \Omega') \neq \Phi_{\Delta} ({\mathcal M}; \Omega)  +
\Phi_{\Delta'} ({\mathcal M'}; \Omega')   
$$
even though $M$ and $M'$ of that subsection are probabilistically independent.

\section{Pragmatic Information as ``Free Information,'' By Way of Mutual Information}
Pragmatic information can be interpreted as a kind of information analog of the Hemholtz free energy of statistical mechanics\footnote{See textbooks such as \cite{StatMech} for an introduction to 
this quantity.}, a measure of the amount of energy in a system available to do useful work.  The Hemholtz free energy, $F$, is defined by the formula
$$
F = U - TS,
$$
where $U$ is the total energy of the system, $T$ is the temperature, and $S$ is the system's entropy.  Similarly, 
as we now show, pragmatic information is a measure of the information that is useful in making a decision.\\

Denote the mutual information between $\mathcal M$ and $\Omega$ as $\mathcal I (\mathcal M; \Omega)$, $p_i = \sum_{m \in \mathcal M} p_{i,m}$ for all $i$,
and $\mathbf p$ as the vector of these marginal probabilities\footnote{Note the distinction between $\mathbf p_m$ and $\mathbf p$:  the former is
the probability distribution of $\omega$, given a specific $m$, but the latter is the average of these distributions over all $m$'s.}.
We then have the following
\begin{theorem}
\begin{align*}
\Phi _\Delta(\mathcal M; \Omega) &= \mathcal I (\mathcal M; \Omega) + D_\Delta(\mathbf p || \mathbf q) \\
                                                 &= \mathcal H (\mathcal M) - H (\mathcal M| \Omega) + D_\Delta(\mathbf p || \mathbf q). \\
\end{align*}
\end{theorem}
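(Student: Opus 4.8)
The plan is to establish the first equality by an exact algebraic decomposition of the single logarithm in the definition of $\Phi_\Delta(\mathcal M;\Omega)$, and then to obtain the second equality from the textbook identity relating mutual information to entropy and conditional entropy. The only idea needed is to insert the marginal posterior $p_i=\sum_m p_{i,m}$ as a bridge between the conditional posterior $p_{i|m}$ and the prior $q_i$.

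First I would use $\log(p_{i|m}/q_i)=\log(p_{i|m}/p_i)+\log(p_i/q_i)$ to write
\begin{align*}
\Phi_\Delta(\mathcal M;\Omega)
 &= \sum_{i,m} p_{i,m}\log\left(\frac{p_{i|m}}{q_i}\right) \\
 &= \sum_{i,m} p_{i,m}\log\left(\frac{p_{i|m}}{p_i}\right)
 + \sum_{i,m} p_{i,m}\log\left(\frac{p_i}{q_i}\right).
\end{align*}
The first sum is exactly $\mathcal I(\mathcal M;\Omega)$: since $p_{i,m}=\varphi_m p_{i|m}$, the message-marginal of this joint law is $\varphi_m$ (because $\sum_i p_{i|m}=1$) and the outcome-marginal is $p_i$ (by definition), so $p_{i|m}/p_i = p_{i,m}/(\varphi_m p_i)$ is precisely the ratio of the joint to the product of marginals that appears inside the mutual information. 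In the second sum the summand no longer depends on $m$, so summing over $m$ first collapses it to $\sum_i p_i\log(p_i/q_i)=D_\Delta(\mathbf p || \mathbf q)$ by the definition of $p_i$. Adding the two pieces yields the first line of the theorem.

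For the second equality I would simply invoke the standard decomposition $\mathcal I(\mathcal M;\Omega)=\mathcal H(\mathcal M)-H(\mathcal M\mid\Omega)$, valid for the joint distribution $p_{i,m}$ and proved in \cite{CoverAndThomas}, and substitute it into the first line. I do not expect a serious obstacle, since both equalities are essentially bookkeeping. The one point demanding care is that the mutual information $\mathcal I(\mathcal M;\Omega)$ must be computed with respect to the joint law $p_{i,m}=\varphi_m p_{i|m}$, whose outcome-marginal is the \emph{posterior} average $\mathbf p$ rather than the prior $\mathbf q$; keeping this distinction straight is exactly what forces the extra term $D_\Delta(\mathbf p || \mathbf q)$ and explains why $\Phi_\Delta$ exceeds the mutual information by precisely the divergence between the averaged posterior and the prior.
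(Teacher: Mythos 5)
Your proposal is correct and follows essentially the same route as the paper's own proof: inserting the marginal posterior $p_i$ via $\log\left(\frac{p_{i|m}}{q_i}\right)=\log\left(\frac{p_{i|m}}{p_i}\right)+\log\left(\frac{p_i}{q_i}\right)$, splitting the sum into $\mathcal I(\mathcal M;\Omega)$ plus $D_\Delta(\mathbf p\,||\,\mathbf q)$, and then invoking the standard identity $\mathcal I(\mathcal M;\Omega)=\mathcal H(\mathcal M)-H(\mathcal M|\Omega)$. Your added checks of the marginals of $p_{i,m}$ are sound and, if anything, slightly more careful than the paper's presentation.
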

\begin{proof}
\begin{align*}
 \Phi_\Delta (\mathcal M; \Omega) &= \sum_{i, m} p_{i,m} \log \left[\frac{p_{i|m}}{q_i}\right] \cr
                       &= \sum_{i, m}  p_{i,m} \log \left[\frac{p_{i|m}}{p_i}\frac{p_i}{q_i}\right] \cr
                                                  &= \sum_{i, m}  p_{i,m} \log \left[\frac{p_{i|m}}{p_i}\right] + 
								    \sum_i  p_i \log \left(\frac{p_i}{q_i}\right)    \cr
                                                  &   = \mathcal I (\mathcal M; \Omega) + D_\Delta(\mathbf p || \mathbf q) \\
                                                 &= \mathcal H (\mathcal M) - H (\mathcal M| \Omega) + D_\Delta(\mathbf p || \mathbf q), 
\end{align*}
since $\sum_i  p_i \log \left(\frac{p_i}{q_i}\right)$ is the Kullbach-Leibler Divergence of $\mathbf p$ and $\mathbf q$.
\end{proof}

We also have the immediate
\begin{corollary}
$
\Phi (\mathcal M; \Omega) \geq \mathcal I (\mathcal M; \Omega),
$
with equality if and only if $\mathbf p = \mathbf q$\footnote{This corollary resolves a confusion in \cite{Weinberger02} by 
showing that pragmatic information and mutual information are equal only in this special case.}.
\end{corollary}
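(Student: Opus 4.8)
The plan is to read this corollary off directly from the Theorem that immediately precedes it, namely the decomposition
$$
\Phi_\Delta(\mathcal M; \Omega) = \mathcal I(\mathcal M; \Omega) + D_\Delta(\mathbf p \,\|\, \mathbf q),
$$
combined with the non-negativity of the Kullbach-Leibler divergence established in Theorem~\ref{NonNegativity}. The entire argument is essentially a one-line consequence, so there is no genuine obstacle to overcome; the work is simply to invoke the two prior results in the right order and to track the equality condition carefully.

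First I would take the displayed decomposition as given. The only term on the right-hand side that is not manifestly identified is $D_\Delta(\mathbf p \,\|\, \mathbf q)$, which is a bona fide Kullbach-Leibler divergence between the marginal posterior $\mathbf p$ (with $p_i = \sum_{m} p_{i,m}$) and the prior $\mathbf q$. Applying Theorem~\ref{NonNegativity} to this pair yields $D_\Delta(\mathbf p \,\|\, \mathbf q) \ge 0$, and substituting into the decomposition gives
$$
\Phi_\Delta(\mathcal M; \Omega) = \mathcal I(\mathcal M; \Omega) + D_\Delta(\mathbf p \,\|\, \mathbf q) \ge \mathcal I(\mathcal M; \Omega),
$$
which is the desired inequality.

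For the equality clause I would again lean on Theorem~\ref{NonNegativity}, whose equality case states that $D_\Delta(\mathbf p \,\|\, \mathbf q) = 0$ holds if and only if $\mathbf p = \mathbf q$. Since $\mathcal I(\mathcal M; \Omega)$ is a fixed quantity independent of this divergence term, the inequality above is tight precisely when that single non-negative summand vanishes. Hence $\Phi_\Delta(\mathcal M; \Omega) = \mathcal I(\mathcal M; \Omega)$ if and only if $\mathbf p = \mathbf q$, completing the proof. The one point worth stating explicitly, to avoid any ambiguity, is that the relevant divergence here is between the \emph{averaged} posterior $\mathbf p$ and the prior $\mathbf q$, not between any individual $\mathbf p_m$ and $\mathbf q$; this is exactly the distinction flagged in the footnote accompanying the definitions of $\mathbf p$ and $\mathbf p_m$, and it is what makes the equality condition $\mathbf p = \mathbf q$ (rather than $\mathbf p_m = \mathbf q$ for every $m$) the correct one.
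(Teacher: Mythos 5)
Your proof is correct and matches the paper's own reasoning exactly: the paper presents this corollary as ``immediate'' from the preceding decomposition theorem, with non-negativity of the Kullbach--Leibler divergence (and its equality case $\mathbf p = \mathbf q$) supplying both the inequality and the equality condition. Your explicit remark distinguishing the averaged posterior $\mathbf p$ from the individual $\mathbf p_m$ is a welcome clarification but does not change the argument.
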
 

When $\mathbf p = \mathbf q$, the information needed to specify outcome $\omega_i \in \Omega$, {\it i.e.} the amount of information used in making the decision, is 
the total amount of information transmitted by receipt of $m \in \mathcal M$, reduced by $H (\mathcal M| \Omega)$, the amount of information that is unneeded to specify the given outcome,
and thus useless for making the decision.  It is in this sense that pragmatic information is similar in spirit to free energy, the difference between the total energy and the internal energy, {\it i.e.} the information that is 
not available to do useful work.  We can therefore characterize pragmatic information as ``free information''; in fact, it is a generalization of the free information defined in \cite{Crutchfield}.
If, in addition, there is one-to-one relationship between messages and outcomes, $H (\mathcal M| \Omega) = 0$, and $\Phi (\mathcal M; \Omega) = \mathcal H$ 
because all of the information transmitted is used in making the decision. \\

In the more general case in which $\mathbf p \neq \mathbf q$, the term $D_\Delta(\mathbf p || \mathbf q)$ is needed to
update the biases that $\Delta$ had prior to the receipt of $m \in \mathcal M$.  In this case, $\Delta$ is provided with two kinds of information; in addition to $\mathcal H(\mathcal M)$, $\Delta$
is also provided with the information of the update.  The characterization of pragmatic information as free information, {\it i.e.} the difference between total information and ``irrelevant information'',
still makes sense.\\  

\section{On Disinformation and the Vexed Question of the Value of Information}
The careful reader will note that we have made no attempt thus far to assign a {\it value} to the information in our incoming messages, however valuable they may be to the receiver.  By avoiding any such mention 
of the value of information, we implicitly make the point that the value of a
message is distinct from the amount of pragmatic information it contains (This point is obvious if we remind ourselves that pragmatic information is measured in bits, whereas value would be measured in dollars or other 
quantity that would prove useful to a decision maker.).  \\

While the most obvious reason for our reluctance to assign such values is that the value of information is highly context dependent, there are 
several other difficulties with such an assignment.  First, as indicated above, a given message/message ensemble can, 
for a given decision maker, $\Delta$, have positive pragmatic information, yet {\it misinform} $\Delta$.  This follows from our definition of
pragmatic information as information that changes $\Delta$'s views of the state of the world, regardless of whether the new view is any more accurate, better, etc. than some base state.  Nor 
are we guaranteed to have enough context to assign a value to a decision and thus to the messages leading to that decision.  For example, what should be the value of a message
that might prompt a terrible outcome with low probability, but is more likely to prompt more beneficial outcomes?  And how should that value compare to the value of another message that leads uniformly to 
outcomes that are bad, but not that bad?  Even more problematic is our inability to guarantee an assignment of values
$V(m_1), V(m_2), V(m_3)$ to messages $m_1, m_2, m_3 \in \mathcal M$, such that $V(m_1) < V(m_2)$ and $V(m_2) < V(m_3)$ implies $V(m_1) < V(m_3)$.\\
 
Nevertheless, the weaker assumption that messages are either 
\begin{itemize}
\item {\bf pragmatically irrelevant with respect to $\Delta$}, or 
\item {\bf pragmatic disinformation (or simply ``disinformation'') with respect to $\Delta$}, or
\item {\bf pragmatically useful with respect to $\Delta$}
\end{itemize}
is justified by \cite{Wolpert}.  These authors envision a situation in which information about the receiver's environment may or may not promote the
receiver's viability.  While the discussion in \cite{Wolpert} uses biological examples, their framework is equally relevant for the receipt of environmental information that 
promotes/degrades a receiver's economic 
viability.  More broadly, we might make the above classification based on whether a given message is less useful, the same as, or more useful to $\Delta$ than not receiving the message at all.  
In any case, \eqref{eq: BasicDef} is the same for pragmatic disinformation and pragmatically useful information, because that formula captures only the degree to which the probabilties of $\Delta$'s decisions, 
and not the value of these decisions to $\Delta$, have changed.\\

For a given $\Delta$, we can then partition $\mathcal M$ as $\mathcal M = \mathcal I_\Delta \cup \mathcal D_\Delta \cup \mathcal U_\Delta$, where: 
\begin{itemize}
\item $\mathcal I_\Delta  $ is the set of pragmatically irrelevant messages with respect to $\Delta$,
\item $\mathcal D_\Delta $ is the set of pagmatically disinformative messages with respect to $\Delta$,
\item $\mathcal U_\Delta$ is the set of pragmatically useful messages with respect to $\Delta$.
\end{itemize}
The corresponding amounts of pragmatically irrelevant, disinformative, and useful messages can then be defined as 
$\Phi_{\Delta} (\mathcal I_\Delta; \Omega), \Phi_{\Delta} (\mathcal D_\Delta; \Omega)$, and $\Phi_{\Delta} (\mathcal U_\Delta; \Omega)$, respectively, with
$$
\Phi_{\Delta} (\mathcal I_\Delta; \Omega) =  
    \sum_{m \in \mathcal I_\Delta} \sum_i p_{i, m} \log \left(\frac{p_{i|m}}{q_i}\right), 
$$
$$
\Phi_{\Delta} (\mathcal D_\Delta; \Omega) =  
    \sum_{m \in \mathcal D_\Delta} \sum_i p_{i, m} \log \left(\frac{p_{i|m}}{q_i}\right), 
$$
and
$$
\Phi_{\Delta} (\mathcal U_\Delta; \Omega) =  
    \sum_{m \in \mathcal U_\Delta} \sum_i p_{i, m} \log \left(\frac{p_{i|m}}{q_i}\right).
$$
Clearly, 
$$
\Phi_{\Delta} (\mathcal I_\Delta; \Omega), \Phi_{\Delta} (\mathcal D_\Delta; \Omega), \Phi_{\Delta} (\mathcal U_\Delta; \Omega) \geq 0,
$$
and
$$
\Phi_{\Delta} ({\mathcal M}; \Omega) = \Phi_{\Delta} (\mathcal I; \Omega) +
                                                          \Phi_{\Delta} (\mathcal D; \Omega) +
                                                           \Phi_{\Delta} (\mathcal U; \Omega).
$$
It follows that 
$$
\Phi_{\Delta} (\mathcal I; \Omega), \Phi_{\Delta} (\mathcal D; \Omega), \Phi_{\Delta} (\mathcal U; \Omega) \leq \Phi_{\Delta} (\mathcal M; \Omega).
$$
In particular, if the pragmatic information in some message ensemble $\mathcal E$ is zero, then it cannot contain any pragmatically useful information.\\

\section{A Reformulation of the Efficient Market Hypothesis}
We now apply the above definitions to make a useful reformulation of the {\it efficient market hypothesis}, a cornerstone of financial economics (see, for example, \cite{Fama}).  According to this hypothesis, current
asset prices, such as stock and bond prices, reflect {\it all} available information, so there is nothing to be gained by poring over charts of price histories or over annual reports.\\
 
However, the assumption that market efficiency is a property of asset price fluctuations suggests that all investors
should find the markets equally inscrutable, a claim belied by the conspicuous long term success of such investors as Warren Buffett.  If we take the pragmatic utility of market information to be its ability to
improve its recipient's ability to achieve above-market returns, we can therefore propose the following reformulation:
\begin{hypothesis}
None of the information available to {\rm a given market participant} is pragmatically useful.
\end{hypothesis}

Thus, under this formulation, a market can be efficient to some participants, but not others because information may be pragmatically useful to one market participant 
but not others\footnote{This is certainly the case, for example, regarding knowledge of short term mispricings that can be exploited by automated trading systems, but not human traders.}.  
That the market can be efficient with respect to certain message ensembles but not others is already a staple of the relevant literature.  \cite{Fama}, for example, distinguishes between the ``weak'', 
``semi-strong'', and ``strong'' forms of the hypothesis that, respectively, price histories, all available 
{\it public} information, and all available information, including insider information, are irrelevant in predicting future asset price fluctuations.  Nevertheless, our conceptual framework seems to provide one immediate 
insight: namely, that a market is necessarily efficient with respect to a given participant and a given message ensemble if that participant is unable to extract any pragmatically useful information from that ensemble.  One such 
situation is when the amount of incoming market information overwhelms the computational capacities of the participant, as 
the present author has found from personal experience!

\section{Directions for Future Research:  $\Delta$'s Computational Capacity and Beyond}

Implicit in the above are the limits that a decision maker's computational abilities play in accruing pragmatic information.
A sense of the importance of the role of these limits is made clear by an example due to \cite{wow}:  If $\Delta$ has no computational abilities at all, in the sense that any message
received will be ignored, $\mathbf p_m = \mathbf q$ for all messages $m$, and $\Phi_{\Delta} ({\mathcal M}; \Omega) =0$, regardless of the length or complexity of $m$!  In other words, no computation implies no pragmatic information!\\

Furthermore, the details of the computational apparatus that $\Delta$ is assumed to possess are relevant, not just for our efficient markets application, but also for 
biological evolution.  Recall, per \cite{Weinberger02} and references therein, that the quasi-species is a simplfied model of abstract ``replicators'', subject to Darwinian ``survival of the fittest'' in the presence of mutation.   It is the interaction of the quasi-species with its environment that ``decides'' which replicator is, in fact, the fittest.  Therefore, this interaction is the $\Delta$ of the present theory, because it is this interaction that ``decides'' the unknown identity of the fastest growing replicator.  The identity of this replicator is the random variable, $\omega$, and the environmental
details that determine growth and mutation rates are the ``messages'' that the quasi-species environment system processes.  The whole point of the original formulation
of the quasi-species theory \cite{QS} was to demonstrate the existence of an ``error catastrophe", which is a limit to the information that the quasi-species ``computation'' could reliably accrue in any given situation.\\

Also, consider the example of the one armed bandit.  The above calculation assumes that we have the arbitrarily large amount of memory in which we can  store the entire history of play.  The calculation would look quite different if we stored, for example, only the most recent 5 trials.\\

All of the above suggests the need for a systematic exploration of how $\Delta$'s ability to accrue pragmatic information depends on its computational abilities.  One approach might be to formalize these abilities via the
Chomsky Hierarchy familiar to students of abstract automata \cite{Sipser} or the more detailed hierarchy in \cite{Crutchfield}.  The latter author also suggests that the extensive semigroup theory of automata and his statistical mechanical approach might be 
of use in understanding how $\Delta$ partitions its input messages into its possible updated ``state of the world'' estimates.\\

In the present paper, we have assumed that a single message updates $\Delta$'s estimates of a single, discrete valued random variable $\omega$, the simplest possible setting for our ideas.  For some applications,
however, the assumption that $\omega$ is a continuous random variable, or even a stochastic process, is more appropriate.  It might also be useful to assume that $\Delta$ is responding to a stream of messages by making multiple decisions, perhaps with the goal of rendering each successsive message more pragmatically useful.  We might even imagine that
$\Delta$'s computational machinery is, itself, being updated by these messages, whether via software updates, neural plasticity, 
or some other mechanism.  Such generalizations might provide useful insights into the ubiquitous phenomenon
of collective intelligence,  of which efficient markets are but one example\footnote{In fact, the recent Santa Fe Institute Conference on Collective Intelligence inspired this work.  See the conference website at
https://www.santafe.edu/info/collective-intelligence-2023/about/.}.

\section*{Acknowledgements}
The author would like to acknowledge Profs. Harald Atmanspacher and Herbert Scheingraber for organizing the NATO Advanced Study Institute on Information Dynamics, where the author first learned of the 
need for a theory of pragmatic information; New York University for a partial stipend to attend the Santa Fe Institute
Conference on Collective Intelligence that encouraged him to complete this work; Prof. Atmanspacher, Kevin Atteson, and Keith Lewis for useful discussions; and Wayne Mareci for proofreading the manuscript.

\end{document}